\newcolumntype{L}[1]{>{\raggedright\let\newline\\\arraybackslash\hspace{0pt}}m{#1}}
\newcolumntype{C}[1]{>{\centering\let\newline\\\arraybackslash\hspace{0pt}}m{#1}}
\newcolumntype{R}[1]{>{\raggedleft\let\newline\\\arraybackslash\hspace{0pt}}m{#1}}
\let\MYcaption\@makecaption
\let\@makecaption\MYcaption
\let\oldgls\gls
\let\oldglspl\glspl
\newcommand\fussy@ifnextchar[3]{%
	\let\reserved@d=#1%
	\def\reserved@a{#2}%
	\def\reserved@b{#3}%
	\futurelet\@let@token\fussy@ifnch}
\def\fussy@ifnch{%
	\ifx\@let@token\reserved@d
		\let\reserved@c\reserved@a
	\else
		\let\reserved@c\reserved@b
	\fi
	\reserved@c}
\renewcommand{\gls}[1]{%
\oldgls{#1}\fussy@ifnextchar.{\@checkperiod}{\@}}
\renewcommand{\glspl}[1]{%
\oldglspl{#1}\fussy@ifnextchar.{\@checkperiod}{\@}}
\newcommand{\@checkperiod}[1]{%
	\ifnum\sfcode`\.=\spacefactor\else#1\fi
}
\newacronym{wrt}{w.r.t.}{with respect to}
\newacronym{RHS}{R.H.S.}{right-hand side}
\newacronym{LHS}{L.H.S.}{left-hand side}
\newacronym{iid}{i.i.d.}{independent and identically distributed}
\let\saved@bibitem\@bibitem\makeatother
\let\@bibitem\saved@bibitem\makeatother
\crefname{equation}{}{}
\Crefname{equation}{}{}
\crefname{claim}{claim}{claims}
\crefname{step}{step}{steps}
\crefname{line}{line}{lines}
\crefname{condition}{condition}{conditions}
\crefname{dmath}{}{}
\crefname{dseries}{}{}
\crefname{dgroup}{}{}
\crefname{Problem}{Problem}{Problems}
\crefname{Theorem}{Theorem}{Theorems}
\crefname{Corollary}{Corollary}{Corollaries}
\crefname{Proposition}{Proposition}{Propositions}
\crefname{Lemma}{Lemma}{Lemmas}
\crefname{Definition}{Definition}{Definitions}
\crefname{Example}{Example}{Examples}
\crefname{Assumption}{Assumption}{Assumptions}
\crefname{Remark}{Remark}{Remarks}
\crefname{Rem}{Remark}{Remarks}
\crefname{remarks}{Remarks}{Remarks}
\crefname{Appendix}{Appendix}{Appendices}
\crefname{Supplement}{Supplement}{Supplements}
\crefname{Exercise}{Exercise}{Exercises}
\crefname{Theorem_A}{Theorem}{Theorems}
\crefname{Corollary_A}{Corollary}{Corollaries}
\crefname{Proposition_A}{Proposition}{Propositions}
\crefname{Lemma_A}{Lemma}{Lemmas}
\crefname{Definition_A}{Definition}{Definitions}
		\let\Cref\crtCref
		\let\cref\crtcref
\def\cleartheorem#1{%
    \expandafter\let\csname#1\endcsname\relax
    \expandafter\let\csname c@#1\endcsname\relax
}
\def\clearthms#1{ \@for\tname:=#1\do{\cleartheorem\tname} }
		\newtheorem{Theorem}{Theorem}
		\newtheorem{Corollary}{Corollary}
		\newtheorem{Proposition}{Proposition}
		\newtheorem{Lemma}{Lemma}
		\newtheorem{Theorem}{Theorem}
		\newtheorem{Proposition}[Theorem]{Proposition}
	\newtheorem{Remark}{Remark}
\theoremstyle{remark}
\theoremstyle{plain}
\newcommand{\qednew}{\nobreak \ifvmode \relax \else
		\ifdim\lastskip<1.5em \hskip-\lastskip
			\hskip1.5em plus0em minus0.5em \fi \nobreak
		\vrule height0.75em width0.5em depth0.25em\fi}
\newcommand{\nn}{\nonumber\\ }
\NewDocumentCommand{\movedownsub}{e{^_}}{%
	\IfNoValueTF{#1}{%
		\IfNoValueF{#2}{^{}}
	}{%
		^{#1}
	}%
	\IfNoValueF{#2}{_{#2}}
}
\let\latexchi\chi
\RenewDocumentCommand{\chi}{}{\latexchi\movedownsub}
\newcommand{\iu}{\mathfrak{i}\mkern1mu}
\newcommand{\calC}{\mathcal{C}}
\newcommand{\calN}{\mathcal{N}}
\newcommand{\ba}{\mathbf{a}}
\newcommand{\bA}{\mathbf{A}}
\newcommand{\bB}{\mathbf{B}}
\newcommand{\bC}{\mathbf{C}}
\newcommand{\bD}{\mathbf{D}}
\newcommand{\bE}{\mathbf{E}}
\newcommand{\boldf}{\mathbf{f}}
\newcommand{\bF}{\mathbf{F}}
\newcommand{\bG}{\mathbf{G}}
\newcommand{\bh}{\mathbf{h}}
\newcommand{\bH}{\mathbf{H}}
\newcommand{\bI}{\mathbf{I}}
\newcommand{\bJ}{\mathbf{J}}
\newcommand{\bM}{\mathbf{M}}
\newcommand{\bn}{\mathbf{n}}
\newcommand{\bN}{\mathbf{N}}
\newcommand{\bp}{\mathbf{p}}
\newcommand{\bP}{\mathbf{P}}
\newcommand{\bQ}{\mathbf{Q}}
\newcommand{\br}{\mathbf{r}}
\newcommand{\bR}{\mathbf{R}}
\newcommand{\bs}{\mathbf{s}}
\newcommand{\bw}{\mathbf{w}}
\newcommand{\bW}{\mathbf{W}}
\newcommand{\bx}{\mathbf{x}}
\newcommand{\bX}{\mathbf{X}}
\newcommand{\by}{\mathbf{y}}
\newcommand{\bz}{\mathbf{z}}
\newcommand{\bZ}{\mathbf{Z}}
\DeclareSymbolFont{bsfletters}{OT1}{cmss}{bx}{n}
\DeclareSymbolFont{ssfletters}{OT1}{cmss}{m}{n}
\DeclareMathSymbol{\bsfGamma}{0}{bsfletters}{'000}
\DeclareMathSymbol{\ssfGamma}{0}{ssfletters}{'000}
\DeclareMathSymbol{\bsfDelta}{0}{bsfletters}{'001}
\DeclareMathSymbol{\ssfDelta}{0}{ssfletters}{'001}
\DeclareMathSymbol{\bsfTheta}{0}{bsfletters}{'002}
\DeclareMathSymbol{\ssfTheta}{0}{ssfletters}{'002}
\DeclareMathSymbol{\bsfLambda}{0}{bsfletters}{'003}
\DeclareMathSymbol{\ssfLambda}{0}{ssfletters}{'003}
\DeclareMathSymbol{\bsfXi}{0}{bsfletters}{'004}
\DeclareMathSymbol{\ssfXi}{0}{ssfletters}{'004}
\DeclareMathSymbol{\bsfPi}{0}{bsfletters}{'005}
\DeclareMathSymbol{\ssfPi}{0}{ssfletters}{'005}
\DeclareMathSymbol{\bsfSigma}{0}{bsfletters}{'006}
\DeclareMathSymbol{\ssfSigma}{0}{ssfletters}{'006}
\DeclareMathSymbol{\bsfUpsilon}{0}{bsfletters}{'007}
\DeclareMathSymbol{\ssfUpsilon}{0}{ssfletters}{'007}
\DeclareMathSymbol{\bsfPhi}{0}{bsfletters}{'010}
\DeclareMathSymbol{\ssfPhi}{0}{ssfletters}{'010}
\DeclareMathSymbol{\bsfPsi}{0}{bsfletters}{'011}
\DeclareMathSymbol{\ssfPsi}{0}{ssfletters}{'011}
\DeclareMathSymbol{\bsfOmega}{0}{bsfletters}{'012}
\DeclareMathSymbol{\ssfOmega}{0}{ssfletters}{'012}
\newcommand{\bgamma}{\bm{\gamma}}
\newcommand{\btheta}{\bm{\theta}}
\newcommand{\bmeta}{\bm{\eta}}
\newcommand{\bxi}{\bm{\xi}}
\newcommand{\bTheta}{\bm{\Theta}}
\newcommand*\rel@kern[1]{\kern#1\dimexpr\macc@kerna}
\newcommand*\widebar[1]{%
  \begingroup
  \def\mathaccent##1##2{%
    \rel@kern{0.8}%
    \overline{\rel@kern{-0.8}\macc@nucleus\rel@kern{0.2}}%
    \rel@kern{-0.2}%
  }%
  \macc@depth\@ne
  \let\math@bgroup\@empty \let\math@egroup\macc@set@skewchar
  \mathsurround\z@ \frozen@everymath{\mathgroup\macc@group\relax}%
  \macc@set@skewchar\relax
  \let\mathaccentV\macc@nested@a
  \macc@nested@a\relax111{#1}%
  \endgroup
}
\DeclareMathOperator*{\argmax}{arg\,max}
\DeclareMathOperator*{\argmin}{arg\,min}
\DeclareMathOperator{\diag}{diag}
\DeclareMathOperator{\tr}{tr}
\DeclareMathOperator{\vect}{vec}
\newcommand{\ifbcdot}[1]{\ifblank{#1}{\cdot}{#1}}
\DeclarePairedDelimiterX\abs[1]{\lvert}{\rvert}{\ifbcdot{#1}}
\DeclarePairedDelimiterX\parens[1]{(}{)}{\ifbcdot{#1}}
\DeclarePairedDelimiterX\brk[1]{[}{]}{\ifbcdot{#1}}
\DeclarePairedDelimiterX\braces[1]{\{}{\}}{\ifbcdot{#1}}
\DeclarePairedDelimiterX\angles[1]{\langle}{\rangle}{\ifblank{#1}{\cdot,\cdot}{#1}}
\DeclarePairedDelimiterX\ip[2]{\langle}{\rangle}{\ifbcdot{#1},\ifbcdot{#2}}
\DeclarePairedDelimiterX\norm[1]{\lVert}{\rVert}{\ifbcdot{#1}}
\DeclarePairedDelimiterX\ceil[1]{\lceil}{\rceil}{\ifbcdot{#1}}
\DeclarePairedDelimiterX\floor[1]{\lfloor}{\rfloor}{\ifbcdot{#1}}
\DeclarePairedDelimiterXPP\trace[1]{\operatorname{Tr}}{(}{)}{}{\ifbcdot{#1}} 
\DeclarePairedDelimiterXPP\col[1]{\operatorname{col}}{\{}{\}}{}{\ifbcdot{#1}} 
\DeclarePairedDelimiterXPP\row[1]{\operatorname{row}}{\{}{\}}{}{\ifbcdot{#1}} 
\DeclarePairedDelimiterXPP\erf[1]{\operatorname{erf}}{(}{)}{}{\ifbcdot{#1}}
\DeclarePairedDelimiterXPP\erfc[1]{\operatorname{erfc}}{(}{)}{}{\ifbcdot{#1}}
\DeclarePairedDelimiterXPP\KLD[2]{D}{(}{)}{}{\ifbcdot{#1}\, \delimsize\|\, \ifbcdot{#2}} 
\DeclarePairedDelimiterXPP\op[2]{\operatorname{#1}}{(}{)}{}{#2} 
\newcommand{\T}{^{\intercal}}
\DeclarePairedDelimiterXPP\indicate[1]{{\bf 1}}{\{}{\}}{}{\ifbcdot{#1}}
\providecommand\given{}
\DeclarePairedDelimiterX\Set[2]\{\}{%
\renewcommand\given{\SetSymbol[\delimsize]{#1}}
#2
}
\DeclarePairedDelimiterX\Setc[1]\{\}{%
\renewcommand\given{\SetSymbol{:}}
#1
}
\NewDocumentCommand\set{s o m}{%
	\IfBooleanTF#1%
	{\IfValueTF{#2}{\Set*{#2}{#3}}{\Setc*{#3}}}%
	{\IfValueTF{#2}{\Set{#2}{#3}}{\Setc{#3}}}%
}
\NewDocumentCommand{\evalat}{ s O{\big} m e{_^} }{%
\IfBooleanTF{#1}%
{\left. #3 \right|}{#3#2|}%
\IfValueT{#4}{_{#4}}%
\IfValueT{#5}{^{#5}}%
}
\NewDocumentCommand \ifcondp {m m} {%
	#1%
	\IfValueT{#2}{\given #2}%
}
\providecommand\given{}
\DeclarePairedDelimiterXPP\cprob[1]{}(){}{
\renewcommand\given{\nonscript\,\delimsize\vert\allowbreak\nonscript\,\mathopen{}}
\DeclarePairedDelimiterXPP\cexp[1]{}[]{}{
\renewcommand\given{\nonscript\,\delimsize\vert\allowbreak\nonscript\,\mathopen{}}
\ifcondp#1
}
\DeclareDocumentCommand \P { s e{_^} >{\SplitArgument{ 1 }{ @| }}d() g } {%
	\mathbb{P}%
	\IfBooleanTF{#1}%
		{
			\IfValueT{#2}{_{#2}}%
			\IfValueT{#3}{^{#3}}%
			\IfValueTF{#5}%
				{\cprob{#4 \given #5}}%
				{\IfValueT{#4}{\cprob{#4}}}%
		}%
		{
			\IfValueT{#2}{_{#2}}%
			\IfValueT{#3}{^{#3}}%
			\IfValueTF{#5}%
				{\cprob*{#4 \given #5}}%
				{\IfValueT{#4}{\cprob*{#4}}}%
		}%
}
\DeclareDocumentCommand \E { s e{_^} >{\SplitArgument{ 1 }{ @| }}d[] g } {%
	\mathbb{E}%
	\IfBooleanTF{#1}%
		{
			\IfValueT{#2}{_{#2}}%
			\IfValueT{#3}{^{#3}}%
			\IfValueTF{#5}%
				{\cexp{#4 \given #5}}%
				{\IfValueT{#4}{\cexp{#4}}}%
		}%
		{
			\IfValueT{#2}{_{#2}}%
			\IfValueT{#3}{^{#3}}%
			\IfValueTF{#5}%
				{\cexp*{#4 \given #5}}%
				{\IfValueT{#4}{\cexp*{#4}}}%
		}%
}
\NewDocumentCommand \dist {m o o} {%
\mathrm{#1}\left(%
	\IfValueT{#3}{%
		\tl_if_blank:nTF{ #3 }{\cdot\, \middle|\, }{#3\, \middle|\, }%
	}
	\IfValueT{#2}{#2}%
\right)%
}
\NewDocumentCommand {\cbrace} {t+ D[]{black} D(){\widthof{#5}} m m } {%
	\begingroup%
		\color{#2}
		\IfBooleanTF{#1}{%
			\overbrace{#4}^%
		}{
			\underbrace{#4}_%
		}%
		{\parbox[c]{#3}{\centering\footnotesize{#5}}}%
	\endgroup%
}
\let\oldforall\forall
\renewcommand{\forall}{\oldforall \, }
\let\oldexist\exists
\renewcommand{\exists}{\oldexist \, }
\DeclareDocumentCommand{\includeCroppedPdf}{ o O{./Figures/} m }{
	\IfFileExists{#2#3-crop.pdf}{}{%
		\immediate\write18{pdfcrop #2#3.pdf #2#3-crop.pdf}}%
	\includegraphics[#1]{#2#3-crop.pdf}
}
\newcommand*{\addFileDependency}[1]{
  \typeout{(#1)}
  \@addtofilelist{#1}
  \IfFileExists{#1}{}{\typeout{No file #1.}}
}
\definecolor{gray90}{gray}{0.9}
	\newcommand{\blue}[1]{{{\color{blue} #1}}}
	\newcommand{\msout}[1]{\text{\color{green} \sout{\ensuremath{#1}}}}
	\newcommand{\del}[1]{{\color{green}\ifmmode \msout{#1}\else\sout{#1}\fi}}
	\newcommand{\blue}[1]{#1}
	\newcommand{\msout}[1]{#1}
	\newcommand{\del}[1]{#1}
\newcommand{\hhide}[1]{}
\newcommand{\txp}[2]{\texorpdfstring{#1}{#2}}
	\def\@testdef #1#2#3{%
		\def\reserved@a{#3}\expandafter \ifx \csname #1@#2\endcsname
			\reserved@a  \else
			\typeout{^^Jlabel #2 changed:^^J%
				\meaning\reserved@a^^J%
				\expandafter\meaning\csname #1@#2\endcsname^^J}%
			\@tempswatrue \fi}
  \def\R{{\mathbb{R}}} \def\C{{\mathbb{C}}}   
\def\Eb{{\mathbb{E}}}
\newcommand{\beq}{\begin{eqnarray}}
\newcommand{\eeq}{\end{eqnarray}}
  \def\cC{{\mathcal{C}}}
 \def\cN{{\mathcal{N}}}
           \def\lA{\left\|}     \def\rA{\right\|}
\newcommand{\opRe}{\operatorname{Re}}
\newcommand{\opIm}{\operatorname{Im}}
\newcommand{\Col}{\operatorname{Col}}
\newacronym{MLE}{MLE}{maximum likelihood estimate}
\newacronym{BLUE}{BLUE}{best linear unbiased estimator}
\begin{document}


\title{Using Reconfigurable Intelligent Surfaces for UE Positioning in mmWave MIMO Systems}

\author{Wei Zhang,~\IEEEmembership{Member,~IEEE}
and Wee Peng Tay,~\IEEEmembership{Senior Member,~IEEE}

\thanks{
The authors are with the School of Electrical and Electronic Engineering, Nanyang Technological University, Singapore (e-mail: weizhang@ntu.edu.sg, wptay@ntu.edu.sg).
This research is supported by A*STAR under its RIE2020 Advanced Manufacturing and Engineering (AME) Industry Alignment Fund – Pre Positioning (IAF-PP) (Grant No. A19D6a0053).
Part of this work has been submitted to the IEEE International Conference on Communications, 2022.
}
}

\maketitle
\begin{abstract}
A reconfigurable intelligent surface (RIS) consists of massive meta elements, which results in a reflection path between a base station (BS) and user equipment (UE). In wireless localization, this reflection path aids in positioning accuracy, especially when the line-of-sight (LOS) path is subject to severe blockage and fading. We develop a RIS-aided positioning framework to locate a UE in environments where the LOS path may or may not be available. We first estimate the RIS-aided channel parameters from the received signals at the UE. To reduce algorithmic complexity, we propose a linear combination of the estimated UE positions from the direct and reflection paths, which is shown to be approximately the maximum likelihood estimator under the large-sample regime when the estimates from different paths are independent. We optimize the RIS phase shifts to improve the positioning accuracy, and extend the proposed approach to the case with multiple BSs and UEs. We derive the Cramér–Rao bound (CRB) and demonstrate numerically that our proposed method approaches the CRB.
\end{abstract}

\section{Introduction}

The reconfigurable intelligent surface (RIS) has been proposed as an aid to wireless communication systems.
A RIS consists of many low-cost meta elements \cite{DongSur2020,Basar2019WirelessAccess}, through which the performance of existing wireless communication systems can be improved without significant additional hardware cost.
Different from a relay, a RIS passively reflects the received signal, changing only its phase shift before transmission to a user equipment (UE) \cite{RenzoSur2020,EmilBeat2020,BoulSur2020}.
With the help of a RIS, a reflected transmission path can be established if the direct transmission path is blocked, which makes the RIS potentially useful for urban or indoor communications \cite{Liang2019Large}.
Compared to traditional wireless communication with transmit beamforming, the phase shifts of a RIS can be configured to achieve passive beamforming \cite{ChenSum2019,ZhaoRIS2021,NadeemCE2020} for RIS-aided systems.
With properly designed passive beamforming, many works in the literature have shown that the RIS can improve various system performance metrics, such as spectral efficiency \cite{ChenSum2019,NadeemCE2020},  received signal to noise ratio (SNR) \cite{Basar2019Tran,Atapattu2020RIS} and bit error rate \cite{Ferreira2020Bit}. These works suggest that RIS can play an important role in future wireless communication systems.

In this paper, we investigate the use of RIS in user equipment (UE) positioning. Localization or positioning is an important task in wireless communications \cite{Yassin2017Localization,WEN2019Survey}. In 5G systems, positioning of the UE has diverse applications, including industrial use cases, smart mobility, and location-based services. The use of 5G millimeter wave (mmWave) has the potential to provide better positioning accuracy compared to the Global Positioning System (GPS) \cite{Wymeersch2017mmWave}. As such, the 3GPP Release 16 \cite{3gpp.21.916} has incorporated standards for location management in the 5G NR framework.

In \cite{Vari2014mmV}, a positioning method based on received signal strength (RSS) of mmWaves was presented. By using sufficient measurements and multiple access points (APs), the position of a UE can be estimated by the trilateration technique, which utilizes the ranges estimated through measuring the received signal power. In \cite{Lin2018indoor}, by using one AP, a hybrid RSS and angle of arrival (AoA)
positioning scheme is developed to provide estimates of both the distance and the orientation of the target. In \cite{Parasas2018ML}, the authors train a  noise-free RSS model and then use it to position UEs with noisy RSS.
However, for the RSS-based methods, the positioning accuracy is determined by the signal strength model, and the signal strength is often corrupted by small-scale fading, which is challenging to estimate, especially when the measurement time of the signal is insufficient.

Apart from the RSS-based methods, some works \cite{Fas2019mmW,Zhou2017low,Wen2021Pos} obtain the position of a UE in mmWave systems by estimating the time of arrival (ToA) and AoA or angle of departure (AoD). Therefore, the positioning task can be treated as a channel estimation problem.
Because of the limited scattering of paths in mmWave communication, compressed sensing (CS) methods can be employed to reduce the measurement time and computational complexity \cite{Alkhateeb2014Channel,Zhang2018RIP}.
Specifically,
in \cite{Fas2019mmW}, the \gls{MLE} for time of delay and AoD is discussed under the line-of-sight (LOS) scenario with the base station (BS) being equipped with massive antennas and the UE having one antenna.
In \cite{Shah2018Position}, the authors estimate the position and rotation angle of the UE using a single BS, and the proposed method can be applied in a non-LOS (NLOS) environment.
In \cite{Zhou2017low}, the received signal measurements are structured as a tensor, based on which the channel parameters such as ToAs, AoAs and AoDs are extracted.
 In \cite{Wen2021Pos}, a tensor-based channel estimation method for positioning and mapping was proposed for diffuse multipaths.

Since a RIS creates a reflection path between a BS and UE, the UE can utilize the measurements from this reflection path as additional information for positioning. Some works have shown that the positioning accuracy improves with the size of the RIS. The Cram\'er-Rao lower bound (CRB) of the positioning accuracy is analyzed in \cite{Hu2018Beyond,Aleg2019CRLB,Ji2020large,Ma2021indoor,Elzanaty2021Re}.
However, few existing literature have developed practical positioning algorithms for a RIS-aided system. Indoor positioning using the RSS is investigated by \cite{zhang2021toward,Zhang2021Meta}, which estimates the position of a UE using the probability distribution of the RSS.
In \cite{Lin2021Channel}, the authors consider channel estimation and geometric mapping for positioning under the twin-RIS scenario.

In this paper, we develop a novel positioning and inference framework for RIS-aided systems using channel estimation techniques.
Our approach is not limited to using the RSS measurements. Different from the existing works \cite{zhang2021toward,Zhang2021Meta}, we formulate our problem under the general case where there may be more than one RIS. In contrast with existing RIS-aided channel estimation methods \cite{Jensen2020,De2021Channel,Zhang2021Cost} that estimate the cascaded channel by assuming that the direct channel is estimated in advance, we estimate the channel parameters such as ToAs, AoAs and AoDs of the direct and reflection paths jointly.
In addition, different from the geometric mapping in \cite{Lin2021Channel}, our proposed inference model considers the estimation accuracy of the channel parameters, which yields a UE position estimation error close to the theoretical CRB.

The main contributions of this paper are summarized as follows:
\begin{itemize}
  \item We consider the down-link MIMO-OFDM setup in this work.
  Direct estimation of the UE position from the received signals is computationally expensive as it involves a nonlinear and non-convex optimization. Therefore,  we propose a two-step positioning framework. In the first step, we estimate the channel parameters of the direct and reflection paths. In the second step, we obtain an estimate of the UE position from the channel parameters of each path. We derive the CRB of the UE position estimate under our positioning framework.

  \item  To infer the UE position from the different estimates corresponding to the direct and reflection paths, we perform a linear combination of these estimates. The linear combination weights depend only on the covariance of the UE position estimates. We show when the estimates from different paths are independent, the proposed linear combination is approximately the \gls{MLE} of the UE position in the large-sample regime.

  \item To optimize the positioning framework, we propose an approach for designing the RIS phase shifts. Specifically, the phase shift design problem is to maximize the expectation of the reflection path gain, which can be then solved using singular value decomposition.

\end{itemize}

One challenge is to distinguish the direct and reflection paths. In this work, different from the existing works where the path with the smallest delay is assigned as the direct path, we distinguish the direct and the reflection paths by ranking a path quantity related to its power level. This method is more robust if the SNR is low. Our proposed RIS-aided positioning framework is also readily extended to the multi-UE and multi-BS scenarios.

The rest of this paper is organized as follows. In \cref{sect:signal model}, the signal and channel model, and our system assumptions are introduced.
In \cref{sec:CRB}, we derive the CRB of the UE positioning error under the signal and channel model.
The proposed RIS-aided channel parameter estimation approach is discussed in \cref{sec:estimation}. In \cref{sec:fusion}, we propose the fusion method to infer the UE position from the estimated channel parameters. In \cref{sec:discussions}, we propose the method to optimize the RIS phase shifts and discuss the extension of our positioning framework to the multi-UE and multi-BS scenarios.
We present numerical results in \cref{sec:numerical}. Finally, we conclude in \cref{sec:conclusions}.

\emph{Notations:} A bold lower case letter $\mathbf{a}$ is a vector and a bold capital letter $\mathbf{A}$ represents a matrix. $\bA\T$, ${{\mathbf{A}}^{H}}$, ${{\mathbf{A}}^{-1}}$, $\mathrm{tr}(\mathbf{A})$, $\left| \mathbf{A} \right|$,  ${{\left\| \mathbf{A} \right\|}_{F}}$ and ${{\left\| \mathbf{a} \right\|}_{2}}$ are, respectively, the transpose, Hermitian, inverse, trace, determinant, Frobenius norm of $\bA$, and the $2$-norm of $\mathbf{a}$.
${{[\mathbf{A}]}_{:.i}}$, ${{[\mathbf{A}]}_{i,:}}$, ${{[\mathbf{A}]}_{i,j}}$, and $[\ba]_i$ are, respectively, the $i$th column, $i$th row, $i$th row and $j$th column entry of $\mathbf{A}$, and the $i$th entry of vector $\mathbf{a}$. The operation $\mathrm{\mathop{vec}}(\mathbf{A})$ stacks the columns of $\mathbf{A}$ to form a column vector.
$\Col(\bA)$ is the column space of matrix $\bA$.
We use $\diag(\ba)$ to represent a diagonal matrix with the vector $\ba$ on the main diagonal.
The circular symmetric complex Gaussian distribution with mean $\mu$ and variance $\sigma^2$ is given by $\dist{\calC\calN}[\mu, \sigma^2]$. We use $\otimes$ to denote the Kronecker product.

\section{System Model} \label{sect:signal model}

In this section, we present our system model and assumptions. We first discuss the channel model, which includes the BS-RIS links, RIS-UE links and BS-UE link. We then present the received signal model at the UE.

\begin{figure}[!htpb]
\centering
\includegraphics[width=.6\textwidth]{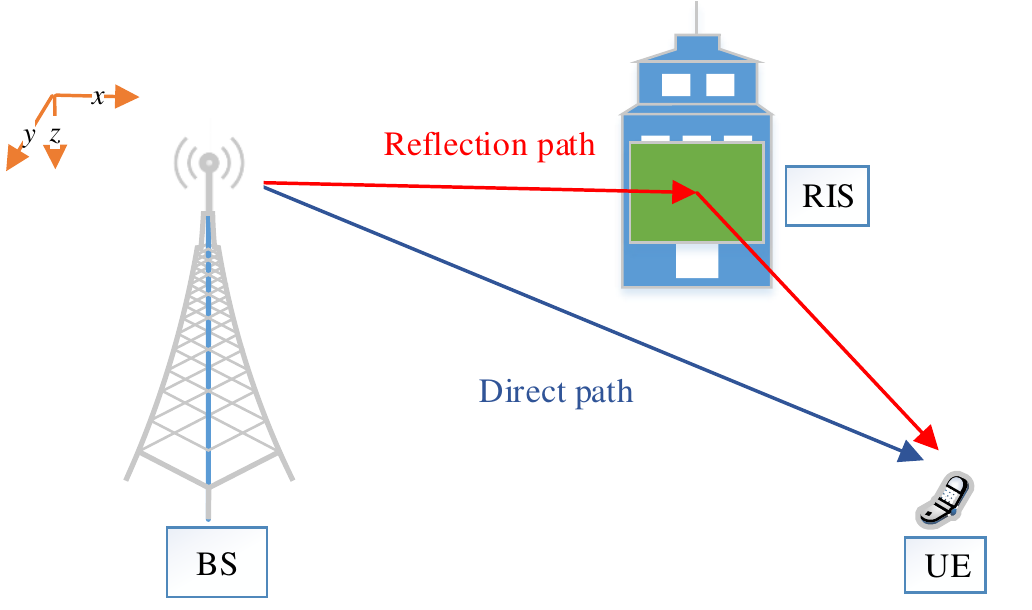}
\caption{Positioning of a UE with the aid of a RIS. } \label{fig:system}
\end{figure}

\subsection{Channel Model}
We assume that the BS has a uniform rectangular array (URA) with $N$ antennas. There are $Q$ RISs and each is equipped with a URA of $M$ elements. The UE has $D$ antennas. In this work, we assume that the position of every RIS is known by the BS and UE. Without loss of generality, we adopt a coordinate system with the BS at its origin and the URA of the BS in $y-z$ plane (see \cref{fig:system} for an illustration). Each RIS' URA is assumed to be contained in a $x-z$ plane perpendicular to the $y-z$ plane of the BS URA.

We also assume that the UE's antennas are contained in a horizontal plane parallel to the BS URA, but with a possibly different orientation. Let $\widebar{\bM}_R\in \R^{3 \times 3}$ be the rotation matrix associated with the UE, given by
\begin{align}
\widebar{\bM}_R &= -\bR_3(\alpha_3)\bR_2(\alpha_2)\bR_1(\alpha_1)\nonumber \\
&=
-\begin{bmatrix}
\cos \alpha_3 & \sin \alpha_3 &0\\
 -\sin \alpha_3 & \cos \alpha_3 &0\\
 0&0&1\\
\end{bmatrix}
\times
 \begin{bmatrix}
1&0&0\\
0& \cos \alpha_2 & \sin \alpha_2 \\
 0& - \sin \alpha_2 &  \cos \alpha_2
\end{bmatrix}
\begin{bmatrix}
\cos \alpha_1 & \sin \alpha_1 &0\\
 -\sin \alpha_1 & \cos \alpha_1 &0\\
 0&0&1\\
\end{bmatrix} ,\label{eq:expression barMR}
\end{align}
where $\alpha_1,\alpha_2,\alpha_3$ are the Euler angles \gls{wrt} the UE. For convenience, we define $\bM_R =[\widebar{\bM}_R]_{2:,3,:}$.
In this work, we assume that $\widebar{\bM}_R$ is known a priori by the UE.

We suppose that the communication system uses OFDM with $K$ subcarriers. For the $k$th subcarrier, the channel from the BS to the $q$th RIS is denoted as $\bG_{k,q}\in \C^{M \times N}$, the channel from the UE to the $q$th RIS is $\bH_{r,k,q} \in \C^{D\times M}$, and the channel from the BS to the UE is $\bH_{d,k} \in \C^{D\times M}$.

\subsubsection{BS-RIS links}

In this work, we model the BS-RIS channel as a mmWave channel. We assume that each RIS is placed at a sufficient height (e.g., on a tall building) so that there is a LOS path between the BS and the RIS.

From the OFDM assumption, the $k$th subcarrier of the $q$th BS-RIS channel is \cite{He2020Cascaded,chen2019channel}
\begin{align}
\! \! \!  \bG_{k,q} \!\!  = \!\!  h_{R_1,q}e^{\!-\iu 2\pi \frac{kW}{K} \tau_{r_1,q}} \ba_R(f_{R_1,q}, v_{R_1,q}\!  )\ba_B^H(g_{B_r,q},v_{B_r,q}\!), \! \! \label{eq:Channel_model_BS_RIS}
\end{align}
where $\iu = \sqrt{-1}$, $ h_{R_1,q} = \alpha_{R_1,q}\sqrt{\beta_{R_1,q}MN}$ with $\beta_{R_1,q}$ being the large scale path gain and $\alpha_{R_1,q}$ being a complex-valued channel coefficient. $W$ is the transmission bandwidth, and $\tau_{r_1,q}$ is the propagation delay of the signal from the BS to the $q$th RIS.
In particular, $\ba_R(f_{R_1,q}, v_{R_1,q}) \in \C^{M\times 1}$ and  $\ba_B(g_{B_r,q},v_{B_r,q})\in \C^{N\times 1}$ are, respectively, the URA response vectors of the RIS and BS, where
\begin{align}
f_{R_1,q} &= \sin\theta_{R_1,q} \cos \phi_{R_1,q}, ~v_{R_1,q} = \cos\theta_{R_1,q} , \label{eq:fR1q}\\
g_{B_r,q} &= \sin\theta_{B_r,q} \sin \phi_{B_r,q},~v_{B_r,q} = \cos \theta_{B_r,q}, \label{eq:fBrq}
\end{align}
with the $\theta_{R_1,q}$ (or $\theta_{B_r,q}$)  and  $\phi_{R_1,q}$ (or $\phi_{B_r,q}$) being the elevation and azimuth AoAs (or AoDs) associated the BS-RIS link, respectively. To be more precise, the URA response vectors $\ba_R(f, g)$ and $\ba_B(f, g)$ in \cref{eq:Channel_model_BS_RIS} are given by
\begin{align}
\ba_R(f,v) &=  \ba_{\tilde{R}}(f) \otimes \ba_{\tilde{R}}(v), \label{eq:aR}\\
\ba_B(g, v) &=  \ba_{\tilde{B}}(g) \otimes \ba_{\tilde{B}}(v), \label{eq:aB}
\end{align}
where
$
\ba_{\tilde{R}}(f) = \frac{1}{M^{1/4}}\left[ 1, \exp({\iu \pi f}), \ldots,\exp({\iu \pi f(M^{1/2}-1)}) \right]
$
 and
$\ba_{\tilde{B}}(g) = \frac{1}{N^{1/4}}\left[ 1, \exp({\iu \pi g}),\ldots,\exp({\iu \pi g(N^{1/2}-1)}) \right]$.



\subsubsection{RIS-UE link}

For the channel between the $q$th RIS and the UE, we again assume that a LOS path exists between the RIS and the UE. The $k$th subcarrier channel of the RIS-UE link is given by
\begin{align}
\!\!\!  \bH_{r,k,q} \!\!= \!\!h_{R_2,q} e^{-\iu 2\pi \frac{kW}{K}\!\tau_{r_2,q}}\ba_U\!(g_{U_r,q}, \! v_{U_r,q})\ba_R^H\!(f_{R_2,q}, \!v_{R_2,q})\!,\!  \!\! \label{eq:Channel_model_RIS_UE}
\end{align}
where $h_{R_2,q} = \alpha_{R_2,q}\sqrt{\beta_{R_2,q}MD} $ with $\beta_{R_2,q}$ being the large scale path gain and $\alpha_{R_2,q}$ being complex-valued channel coefficient, and $\tau_{r_2,q}$ is the delay.
The URA response vector $\ba_R(f_{R_2,q},v_{R_2,q})$ is given in \cref{eq:aR} and $\ba_U(g_{U_r,q}, v_{U_r,q})\in \C^{D\times 1}$ is the URA response vector of the UE, where
\begin{align}
& f_{R_2,q} = \sin\theta_{R_2,q} \sin \phi_{R_2,q}, ~v_{R_2,q} = \cos \theta_{R_2,q} ,\label{AoA:U_r_q 1} \\
&
\begin{bmatrix}
 g_{U_r,q}\\
v_{U_r,q}
\end{bmatrix}=
\bM_R
\begin{bmatrix}
\sin\theta_{R_2,q} \cos \phi_{R_2,q}\\
\sin\theta_{R_2,q} \sin \phi_{R_2,q}\\
\cos \theta_{R_2,q}
\end{bmatrix}, \label{AoA:U_r_q}
\end{align}
with  $\theta_{R_2,q}$  and  $\phi_{R_2,q}$ being the elevation and azimuth AoDs associated with the RIS-UE link. Abusing terminology, we refer to $(f_{R_2,q},v_{R_2,q})$ as the AoD of the $q$th RIS, and $(g_{U_r,q},v_{U_r,q})$ as the AoA of the UE on the reflection path.

The URA response vector of the UE is
\begin{align}
\ba_U(g, v)=\ba_{\tilde{U}}(g) \otimes \ba_{\tilde{U}}(v), \label{eq:aU}
\end{align}
where $\ba_{\tilde{U}}(g) = \frac{1}{D^{1/4}}\left[ 1, \exp({\iu \pi g}), \ldots,\exp({\iu\pi g(D^{1/2}-1)}) \right] \in \C^{D^{1/2} \times 1}$. For example, the response due to the $k$th subcarrier on the RIS-UE link is given by $\ba_U(g_{U_r,q},v_{U_r,q})$.


\subsubsection{BS-UE link}
We model the BS-UE link channel using the Rician fading model, given by
\begin{align}
\bH_{d,k} =\widebar{\bH}_{d,k} + \underbrace{\sqrt{ {1}/{(1+K_d)}}\bZ_{d,k}}_{=\tilde{\bZ}_{d,k}}, \label{eq:Channel_model_BS_UE}
\end{align}
where $K_d$ is the Rician factor, $\widebar{\bH}_{d,k}$ is the deterministic component or the LOS path, and $\bZ_{d,k}$ denotes the small-scale fading
whose entries are \gls{iid} according to $\cC\cN(0,\beta_d)$ with $\beta_d$ being the large scale path gain. The expression of $\widebar{\bH}_{d,k}$ is given by
\begin{align}
\widebar{\bH}_{d,k} =h_{d} e^{-\iu2\pi \frac{kW}{K}\tau_d}\ba_U(g_{U_d}, v_{U_d})\ba_B^H(g_{B_d},v_{B_d}),
\end{align}
where we let $h_{d} =\sqrt{ \frac{K_d}{1+K_d}\beta_{d}ND}  \alpha_d$ with $\alpha_{d}$ being complex-valued channel coefficient, and the URA response vectors of the UE and the BS, $\ba_U(g_{U_d}, v_{U_d})$ and $\ba_B(g_{B_d}, v_{B_d})$ are defined in \cref{eq:aU} and \cref{eq:aB}, respectively. We have
\begin{align}
& g_{B_d} =\sin \theta_{B_d}\sin \phi_{B_d}  , ~v_{B_d}=\cos\theta_{B_d}, \label{AoD:B_d}\\
&
\begin{bmatrix}
g_{U_d}\\
v_{U_d}
\end{bmatrix}
=
\bM_R
\begin{bmatrix}
\sin\theta_{B_d} \cos \phi_{B_d}\\
\sin\theta_{B_d} \sin \phi_{B_d}\\
\cos \theta_{B_d}
\end{bmatrix}, \label{AoD:U_d}
\end{align}
 where $\theta_{B_d}$  and  $\phi_{B_d}$ are the elevation and azimuth AoDs associated the BS-UE link. Abusing terminology, we refer to $(g_{B_d},v_{B_d})$ as the AoD of the BS, and $(g_{U_d},v_{U_d})$ as the AoA of the UE on the LOS path.

In summary, using the channel models of the BS-RIS link in \cref{eq:Channel_model_BS_RIS}, the RIS-UE link in \cref{eq:Channel_model_RIS_UE}, and the BS-UE link in \cref{eq:Channel_model_BS_UE}, the effective channel between the BS and UE on the $k$th subcarrier can be written as
\begin{align}
\bH_k &= \widebar{\bH}_{d,k}+\sum_{q=1}^{Q} \bH_{r,k,q} \bTheta_q \bG_{k,q} + \tilde{\bZ}_{d,k} \label{eq:final expression of Hk}\\
&=  h_{d} e^{-\iu 2\pi \frac{kW}{K}\tau_d} \ba_U(g_{U_d}, v_{U_d})\ba_B^H(g_{B_d},v_{B_d}) + \tilde{\bZ}_{d,k}\nonumber  \\
&~~~+\sum_{q=1}^{Q} h_{r,q} e^{-\iu 2\pi \frac{kW}{K}(\tau_{r_1,q} + \tau_{r_2,q})}\ba_U (g_{U_r,q}, v_{U_r,q })\ba_B^H(g_{B_r,q}, v_{B_r,q}),\nonumber
\end{align}
where $\bTheta_q=\diag(\btheta_q)$ with $\btheta_q = [e^{\iu\theta_q^{(1)}}, \ldots,e^{\iu\theta_q^{(M)}}]$ denoting the phase shift of the $q$th RIS, and $h_{r,q}=h_{R_1,q}h_{R_2,q}\ba_R^H(f_{R_2,q},v_{R_2,q})\bTheta_q \ba_R(f_{R_1,q},v_{R_1,q})$.
For convenience, we denote $\widebar{\bH}_k =\widebar{\bH}_{d,k} +  \sum_{q=1}^{Q} \bH_{r,k,q} \bTheta_q \bG_{k,q}$ in \cref{eq:final expression of Hk}.

Here, we define the channel parameters as
\begin{align}\label{eq:bmeta}
\boldsymbol{\eta}=[\boldsymbol{\eta}_d\T,\boldsymbol{\eta}_{r,1}\T, \ldots, \boldsymbol{\eta}_{r,Q}\T] \T\in \R^{(7+5Q)},
\end{align}
where
$\boldsymbol{\eta}_d =[\opRe\{h_d\}, \opIm\{ h_d \},\tau_d,g_{U_d},v_{U_d},g_{B_d},v_{B_d}]\T$,
$\boldsymbol{\eta}_{r,q} =[\opRe\{h_{r,q}\},\opIm\{h_{r,q}\}, \tau_{r_2,q},g_{U_r,q},v_{U_r,q}]\T.$
We denote the position of the UE as $\bp_U=[x_U,y_U,z_U]\T$, and the position of the $q$th RIS as $\bp_{R,q}=[x_{R,q},y_{R,q},z_{R,q}]\T$. To relate the channel parameters $\bmeta$ to the UE position, let
\begin{align}\label{eq:bxi}
\bxi=[\bp_U\T,\opRe\{h_d\}, \opIm\{h_d\},\opRe\{h_{r,1}\}, \opIm\{h_{r,1}\},
 \ldots,\opRe\{h_{r,Q}\}, \opIm\{h_{r,Q}\}]\T.
\end{align}
Then, we can define a function $F(\bxi)= \bmeta$ from the relations of \cref{eq:fR1q}, \cref{eq:fBrq}, \cref{AoA:U_r_q 1}, \cref{AoA:U_r_q}, \cref{AoD:B_d}, \cref{AoD:U_d}, and  the following equalities:
\begin{align}
\begin{aligned}\label{eq:F}
\tau_{d}  &={\| \bp_U \|_2}/{c},\
\tau_{r_2}  = {\| \bp_U-\bp_R \|_2}/{c}\\
\theta_{B,d} &= \arccos{\frac{z_U}{\| \bp_U \|_2}},
\phi_{B,d}  =  \arctan 2(y_U,x_U),\\
\theta_{R_2,q} &= \arccos{\frac{z_U-z_{R,q}}{\| \bp_U-\bp_{R,q} \|_2}},\\
\phi_{R_2,q}   &=  \arctan 2(y_U-y_{R,q},x_U-x_{R,q}).
\end{aligned}
\end{align}

\subsection{Received Signal at the UE}
Suppose that the UE receives signals over $T$ time slots. From the channel model \cref{eq:final expression of Hk}, the received signal at the UE at each time $t=1,\ldots,T$ on the $k$th subcarrier is given by
\begin{align}
  \br_k(t) &= \bH_k \bx (t)  +  \bn_k(t), \label{eq:t receive signal}
\end{align}
where $\bx_k(t) \in \C^{N \times 1}$ is the transmitted signal from the BS at time $t$, and $\bn_k(t) \in \C^{D \times 1}$ is a noise vector with entries \gls{iid} according to the complex Gaussian distribution $\dist{\cC\cN}[0, \sigma^2 ]$ and independent across time. Let
$\bR_k = [\br_k(1),\ldots, \br_k(T)] \in \C^{D\times T}$,
$\bX= [\bx(1),\ldots, \bx(T)] \in \C^{D\times T}$, and
$\bN_k = [\bn_k(1),\ldots, \bn_k(T)] \in \C^{D\times T}$.
We assume that the transmitted signals are orthogonal, i.e., $\bX\bX^H = {T}/{D}\bI$, where $\bI$ is the identity matrix.
Moreover, the transmit power is assumed to be unit, i.e., $\|\bx(t)\|_2^2 = 1$, for $t=1,\ldots,T$. The compact form of the received signal in \cref{eq:t receive signal} is given by
\begin{align}
\bR_k = \bH_k \bX + \bN_k. \label{eq:compact signal}
\end{align}
Right multiplying \cref{eq:compact signal} by ${D}/{T}\bX^H$, we have
\begin{align}
({D}/{T})\bR_k\bX^H  =\bH_k+ ({D}/{T})\bN_k\bX^H. \label{eq:mid receive signal}
\end{align}
The entries in $({D}/{T})\bN_k\bX ^H$ are \gls{iid} Gaussian $\cC \cN(0, \sigma^2  D/T)$ random variables. Here, we define $\tilde{\bR}_k = ({D}/{T})\bR_k\bX^H$ and recalling the definition of $\bH_k$ in \cref{eq:final expression of Hk}, we obtain
\begin{align}
\tilde{\bR}_k  = \widebar{\bH}_k + \tilde{\bZ}_{d,k} + ({D}/{T})\bN_k\bX^H
=\widebar{\bH}_k + \tilde{\bN}_k, \label{eq:R observations}
\end{align}
where we denote $\tilde{\bN}_k =\tilde{\bZ}_{d,k}+(D/T)\bN_k\bX^H$, and its entries follows $\cC\cN(0, \tilde{\sigma}^2 )$ with $ \tilde{\sigma}^2=\dfrac{D}{T}\sigma^2+\dfrac{\beta_d}{1+K_d}$.

Our objective is to infer the position of the UE by using the observations $\{ \tilde{\bR}_k \}_{k=1}^K$ in \cref{eq:R observations}. Because directly estimating the UE position from \cref{eq:R observations} is challenging, we first estimate the channel parameters, from which the UE position is then inferred.

\section{CRB for UE Position Estimation}\label{sec:CRB}
In this section, we derive the CRB for the UE position estimation based on the observations in \cref{eq:R observations}. We will compare the performance of the proposed method against this bound in the numerical results in \cref{sec:numerical}.

\subsection{FIM of the channel parameters \txp{$\bmeta$}{eta}}
Recall that our observations are $\tilde{\bR}_k  =\widebar{\bH}_k+ \tilde{\bN}_k$ in \cref{eq:R observations}.
We perform two steps to obtain the Fisher information matrix (FIM).
In the first step, we compute the FIM \gls{wrt} $\boldsymbol{\eta}$ of \cref{eq:bmeta}.
For any unbiased estimator $\boldsymbol{\hat{\eta}}$, we have
\begin{align}
\Eb[(\boldsymbol{\hat{\eta}}-\boldsymbol{\eta}) (\boldsymbol{\hat{\eta}}-\boldsymbol{\eta})\T] \succeq \left(\sum_{k=0}^{K-1}\bF_{\bmeta}^{(k)}\right)^{-1},
\end{align}
where $\bF_{\bmeta}^{(k)} \in  \R^{(7+5Q)\times (7+5Q)}$ is the FIM of $\bmeta$ based on the observations from the $k$th subcarrier.
Accordingly, the FIM of $\bmeta$ based on the observations from all the $K$ subcarriers is
\begin{align}
 \bF_{\bmeta}= \sum_{k=0}^{K-1} \bF^{(k)}_{\bmeta}. \label{eq:FIM channel parameters}
\end{align}
Because the noise in \cref{eq:R observations} is Gaussian, we have the following
\begin{align}
\ln f(\tilde{\bR}_k| \boldsymbol{\eta}) =  -\frac{1}{ \tilde{\sigma}^2}\tr((\tilde{\bR}_k-\widebar{\bH}_k)^H (\tilde{\bR}_k-\widebar{\bH}_k)) + C,
\end{align}
where $C$ is a normalization constant. The $(i,j)$-th element of $\bF_{\bmeta}^{(k)}$ is then given by

\begin{align*}
[\bI^{(k)}_{\bmeta}]_{i,j} & =-\Eb\left[\frac{\partial^2 \ln f(\tilde{\bR}_k| \boldsymbol{\eta})}{\partial {\eta}_i \partial {\eta}_j}\right].
\end{align*}

After simplifications, we have
\begin{align}
[\bI^{(k)}_{\bmeta}]_{i,j} = \frac{2}{ \tilde{\sigma}^2}\real \left\{ \tr\left( \frac{\partial \widebar{\bH}_k^H}{\partial \eta_i} \frac{\partial \widebar{\bH}_k}{\partial \eta_j} \right) \right\}.
\end{align}
\Cref{app:CRB} provides detailed derivation for terms in the FIM.

\subsection{FIM for the UE position parameters \txp{$\bxi$}{bxi}} \label{section FIM p}

To derive the FIM for the UE position parameters $\bxi$, we use the relation $F(\cdot)$ in \cref{eq:F}. The Jacobian matrix $\bJ \in \R^{(7+5Q) \times (5+2Q)}$ of $F$ is given in \cref{app:jaco}.
The FIM for $\bxi$ is then given by
\begin{align}\label{eq:Ibxi}
\bF_{\bxi} = \bJ\T \sum_{k=1}^{K} \bF^{(k)}_{\bmeta} \bJ \in \R^{(5+2Q) \times (5+2Q)}.
\end{align}
Accordingly, a lower bound for the MSE of the UE position is as follows:
\begin{align}
\text{MSE}(\bp_U) \ge \tr \left([\bF_{\bxi}^{-1}]_{1:3,1:3}\right).
\end{align}

\begin{Proposition} \label{pro:crb bound}
 Let
$\bxi_d  =[\bp_U\T, \opRe\{h_d\}, \opIm\{h_d\} ]$ and $ \bxi_{r,q}  =[\bp_U\T, \opRe\{ h_{r,q}\}, \opIm\{h_{r,q}\}] .$
When we only utilize parameters associated with the direct path for the UE positioning task, the error covariance matrix
\begin{align}
\bC_{{\bp}_{U}}^{(d)}=\Eb[(\hat{\bp}_{U}^{(d)} - {\bp}_U) (\hat{\bp}_{U}^{(d)} - {\bp}_U)\T] \label{eq:cov_direct}
\end{align}
satisfies the following bound,
\begin{align}
\bC_{{\bp}_{U}}^{(d)} &\succeq  \left[ ~ \widebar{\bC}_{\bxi_d}\right]_{1:3,1:3} \succeq  \left[ \left(\bJ_d\T \bF_{\bmeta_d}  \bJ_d \right)^{-1} \right]_{1:3,1:3}, \label{eq:crb direct}
\end{align}
where
$\widebar{\bC}_{\bxi_d}=\left(\bJ_d\T \widebar{\bC}_{{\bmeta}_d}^{-1}  \bJ_d \right)^{-1}$ with
$ \widebar{\bC}_{{\bmeta}_d} = \left[ \bF_{\bmeta} ^{-1}\right]_{1:7,1:7}, \bJ_d  = \frac{\partial \bmeta_d }{\partial \bxi_d\T}\in \R^{7 \times 5}$, and $\bF_{\bmeta_d} \in \C^{7\times 7}$ is the FIM of ${\bmeta_d}$.

When we only utilize parameters associated with the $q$th RIS path for the UE positioning task, the error covariance matrix
\begin{align}
\bC_{{\bp}_{U}}^{(r,q)}=\Eb[(\hat{\bp}_{U}^{(r,q)} - {\bp}_U) (\hat{\bp}_{U}^{(r,q)} - {\bp}_U)\T] \label{eq:cov_reflect}
\end{align}
satisfies the following bound,
\begin{align}
\bC_{{\bp}_{U}}^{(r,q)} &\succeq  \left[ \widebar{\bC}_{\bxi_{r,q}}\right]_{1:3,1:3} \succeq  \left[ \left(\bJ_{r,q}\T \bF_{\bmeta_{r,q}}  \bJ_{r,q} \right)^{-1} \right]_{1:3,1:3}, \label{eq:crb reflect}
\end{align}
where
$ \widebar{\bC}_{\bxi_{r,q}}=(\bJ_{r,q}\T \widebar{\bC}_{{\bmeta}_{r,q}}^{-1}  \bJ_{r,q} )^{-1}$ with
 $\widebar{\bC}_{{\bmeta}_{r,q}} = [ \bF_{\bmeta} ^{-1}]_{5q+3:7+5q,5q+3:7+5q}$,
$\bJ_{r,q}  = \frac{\partial \bmeta_{r,q}}{\partial \bxi_{r,q}\T}\in \R^{5 \times 5}$, and $\bF_{\bmeta_{r,q}} \in \C^{5\times 5}$ is the FIM of ${\bmeta}_{r,q}$.
\end{Proposition}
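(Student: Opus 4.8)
The plan is to read each chain in \eqref{eq:crb direct} (and likewise \eqref{eq:crb reflect}) as the composition of a Cram\'er-Rao bound for the two-step estimator (first the channel parameters of one path, then the UE geometry) with an elementary comparison of covariance matrices. I would prove the direct-path statement in detail and obtain \eqref{eq:crb reflect} verbatim after replacing the index block $1{:}7$ by $5q{+}3{:}7{+}5q$ and the map $F_d:\bxi_d\mapsto\bmeta_d$ by its reflection-path analogue $F_{r,q}:\bxi_{r,q}\mapsto\bmeta_{r,q}$; note that for the reflection path $\bJ_{r,q}$ is square, so the invertibility caveats below are automatic there.

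For the left inequality in \eqref{eq:crb direct}: in the large-sample regime the channel estimate produced by the first step is asymptotically unbiased and efficient, so $\hat\bmeta_d$ is approximately Gaussian with mean $\bmeta_d$ and covariance the marginal CRB $\widebar\bC_{\bmeta_d}=[\bF_{\bmeta}^{-1}]_{1:7,1:7}$. Since $\bmeta_d=F_d(\bxi_d)$ with Jacobian $\bJ_d=\partial\bmeta_d/\partial\bxi_d\T$, and $\hat\bp_U^{(d)}$ arises as the position block of an estimator of $\bxi_d$ formed from $\hat\bmeta_d$ alone, we are in the induced problem of recovering $\bxi_d$ from the approximately Gaussian $\hat\bmeta_d$; its Cram\'er-Rao bound is $(\bJ_d\T\widebar\bC_{\bmeta_d}^{-1}\bJ_d)^{-1}=\widebar\bC_{\bxi_d}$ (here I use that $\bJ_d$ has full column rank), whose $(1{:}3,1{:}3)$ principal block lower bounds $\bC_{\bp_U}^{(d)}$. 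Equivalently one may reparametrise the direct path of the full model by $\bxi_d$, keep $\bmeta_{r,1},\dots,\bmeta_{r,Q}$ as nuisance parameters, apply the transformation rule \eqref{eq:Ibxi} with Jacobian $\diag(\bJ_d,\bI_{5Q})$, and take the Schur complement of the reflection block, which returns the same $\widebar\bC_{\bxi_d}$.

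For the right inequality I would use only positive definiteness of $\bF_{\bmeta}$ and the block-inverse identity. With $\bF_{\bmeta_d}=[\bF_{\bmeta}]_{1:7,1:7}$ (the FIM of $\bmeta_d$ when the reflection parameters are known), the Schur-complement formula gives $\widebar\bC_{\bmeta_d}=(\bF_{\bmeta_d}-\bS)^{-1}$ for some $\bS\succeq\bzero$, hence $\widebar\bC_{\bmeta_d}\succeq\bF_{\bmeta_d}^{-1}$ and so $\widebar\bC_{\bmeta_d}^{-1}\preceq\bF_{\bmeta_d}$. Conjugation by $\bJ_d$ preserves this, $\bJ_d\T\widebar\bC_{\bmeta_d}^{-1}\bJ_d\preceq\bJ_d\T\bF_{\bmeta_d}\bJ_d$; full column rank of $\bJ_d$ makes both sides positive definite, so inverting reverses the order and extracting the $(1{:}3,1{:}3)$ principal block preserves it, yielding $[\widebar\bC_{\bxi_d}]_{1:3,1:3}\succeq[(\bJ_d\T\bF_{\bmeta_d}\bJ_d)^{-1}]_{1:3,1:3}$.

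The step I expect to be the crux is the left inequality: it relies on the asymptotic unbiasedness and efficiency of the channel-estimation step (so that $\hat\bmeta_d$ attains the marginal CRB) and on $\hat\bp_U^{(d)}$ being an admissible estimator of the position block of $\bxi_d$ in the induced model, so that the transformation-of-parameters Cram\'er-Rao bound applies. By contrast the matrix-ordering chain behind the right inequality is routine, needing only $\bF_{\bmeta}\succ\bzero$ and $\rank\bJ_d=5$ (and, for \eqref{eq:crb reflect}, invertibility of the square $\bJ_{r,q}$).
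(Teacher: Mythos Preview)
Your proposal is correct and follows essentially the same route as the paper: the first inequality is the Cram\'er--Rao bound for $\bxi_d$ obtained by passing the full-model marginal CRB $\widebar{\bC}_{\bmeta_d}=[\bF_{\bmeta}^{-1}]_{1:7,1:7}$ through the Jacobian $\bJ_d$, and the second inequality comes from the Schur-complement comparison $[\bF_{\bmeta}^{-1}]_{1:7,1:7}\succeq\bF_{\bmeta_d}^{-1}$, conjugated by $\bJ_d$ and inverted. Your write-up is in fact more careful than the paper's appendix (which states the Schur-complement inequality with the sign reversed, though it draws the correct conclusion), and your explicit mention of the full-column-rank hypothesis on $\bJ_d$ is a useful addition.
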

\begin{proof}
See \cref{app:single position}.
\end{proof}

\section{Estimation of Channel Parameters}\label{sec:estimation}

In this section, we formulate optimization problems to estimate the AoDs from the BS $(g_{B_d},v_{B_d})$, and propagation delays $\tau_d$ and $\{\tau_{r_2,q}\}_{q=1}^Q$ along the LOS path from the BS to the UE and the reflection paths from each RIS to the UE, respectively. We also estimate the AoAs at the UE $(g_{U_d},v_{U_d})$ and $(g_{U_r,q},v_{U_r,q})$ along the LOS path and reflection paths, respectively.

Because the noise $\tilde{\bN}$ in \cref{eq:R observations} is Gaussian, the \gls{MLE} of $\bmeta$ of \cref{eq:bmeta} is given by the following:
\begin{align} \label{original eta problem}
\min_{\bmeta}
\sum_{k=0}^{K-1}\| \tilde{\bR}_k - \bH_k \|_F^2.
\end{align}
However, directly solving the above problem is challenging because it is nonlinear and nonconvex in $\bmeta$. However, we note that the rank of  $\widebar{\bH}_k$ in \cref{eq:R observations} is  $Q+1$. We can leverage this low-rank property to estimate the channel parameters.

\subsection{Estimation of AoD \txp{$(g_{B_d},v_{B_d})$}{g\_Bd, v\_Bd}}

The AoD $(g_{B_d},v_{B_d})$ is for the BS-UE link given in \cref{AoD:B_d}. We discuss the estimation of $g_{B_d}$. The estimation of $v_{B_d}$ is done similarly.
We reshape $\{ \tilde{\bR}_k \}_{k=0}^{K-1}$ over the dimensions of $\ba_{\tilde{B}}(g_{B_d})$ and $\{\ba_{\tilde{B}}(g_{B_{r,q}})\}_{q=1}^Q$ as
\begin{align}
 {\bR}_{B}
 &=\brk*{\ba_{\tilde{B}}(g_{B_d}) ,\ba_{\tilde{B}}(g_{B_r,1}),\ldots,\ba_{\tilde{B}}(g_{B_r,Q})} \bQ_{B} +{\bN}_{B} \in \C^{\sqrt{N} \times \sqrt{N} DK}. \label{eq:expression RB}
\end{align}
Since $g_{B_r,q}$, for all $q$, in \cref{eq:fBrq} is known a priori as we assume that the position of the $q$th RIS is known, we only need to estimate $g_{B,d}$ from \cref{eq:expression RB} by solving
\begin{align}
&\min_{f_{B_d}} \lA{\bR}_{B} - \bA_B \bQ_{B}  \rA_F^2  \nonumber\\
&\text{subject to }\bQ_{B}  = \left(\bA_B^H \bA_B\right)^{-1}\bA_B^H {\bR}_{B}. \label{eq:problem equ}
\end{align}
where $\bA_B =\left[\bA_{B_r},\ba_{\tilde{B}}(g_{B_d}) \right]$ with $\bA_{B_r} =[ \ba_{\tilde{B}}(g_{B_r,1}),\ldots,\ba_{\tilde{B}}(g_{B_r,Q})]$. We assume $g_{B_{r,q}}$ is distinct for each $q$, which can be achieved by carefully deploying the RISs. Thus $\bA_{B_r}^H\bA_{B_r}$ is invertible and we have the following result.

\begin{Lemma}
The problem in \cref{eq:problem equ} is equivalent to
\begin{align}
\max_{g_{B_d}} \lA\widetilde{\ba}^H_{\tilde{B}}(g_{B_d}){\bR}_{B} \rA_2^2,  \label{eq:transform problem}
\end{align}
where $\widetilde{\ba}_{\tilde{B}}(g_{B_d})=\frac{\ba_{\tilde{B}}(g_{B_d}) -\bP_r \ba_{\tilde{B}}(g_{B_d})}{\| \ba_{\tilde{B}}(g_{B_d}) -\bP_r \ba_{\tilde{B}}(g_{B_d})\|_2} $ and $\bP_r =\bA_{B_r} (\bA_{B_r}^H\bA_{B_r})^{-1}\bA_{B_r}^H$.
\end{Lemma}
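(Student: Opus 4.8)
The plan is to eliminate the inner variable $\bQ_B$, rewrite the cost as the squared norm of a projection residual, and then split the orthogonal projection onto $\Col(\bA_B)$ into a part coming from the known columns $\bA_{B_r}$ and a part depending on $g_{B_d}$. First I would substitute the constraint $\bQ_B=(\bA_B^H\bA_B)^{-1}\bA_B^H\bR_B$ into $\lA\bR_B-\bA_B\bQ_B\rA_F^2$. By the distinctness assumption $\bA_{B_r}$ has full column rank, and as long as $\ba_{\tilde{B}}(g_{B_d})\notin\Col(\bA_{B_r})$ the Gram matrix $\bA_B^H\bA_B$ is invertible, so $\bA_B\bQ_B=\bP_{\bA_B}\bR_B$, where $\bP_{\bA_B}=\bA_B(\bA_B^H\bA_B)^{-1}\bA_B^H$ is the orthogonal projector onto $\Col(\bA_B)$. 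Using $\bP_{\bA_B}=\bP_{\bA_B}^H=\bP_{\bA_B}^2$, this yields $\lA\bR_B-\bA_B\bQ_B\rA_F^2=\lA\bR_B\rA_F^2-\lA\bP_{\bA_B}\bR_B\rA_F^2$, and since $\lA\bR_B\rA_F^2$ is independent of $g_{B_d}$, minimizing \cref{eq:problem equ} is equivalent to maximizing $\lA\bP_{\bA_B}\bR_B\rA_F^2$.

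Next I would use that $\Col(\bA_B)$ is the orthogonal direct sum of $\Col(\bA_{B_r})$ and the line spanned by $(\bI-\bP_r)\ba_{\tilde{B}}(g_{B_d})$, the component of the unknown steering vector orthogonal to the known subspace. Orthogonal projectors add over orthogonal direct sums, so $\bP_{\bA_B}=\bP_r+\widetilde{\ba}_{\tilde{B}}(g_{B_d})\widetilde{\ba}_{\tilde{B}}^H(g_{B_d})$ with $\widetilde{\ba}_{\tilde{B}}(g_{B_d})$ exactly the unit vector in the statement. Because $\bP_r\widetilde{\ba}_{\tilde{B}}(g_{B_d})=\bzero$, the cross term vanishes and $\lA\bP_{\bA_B}\bR_B\rA_F^2=\lA\bP_r\bR_B\rA_F^2+\lA\widetilde{\ba}_{\tilde{B}}(g_{B_d})\widetilde{\ba}_{\tilde{B}}^H(g_{B_d})\bR_B\rA_F^2$. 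The first term is constant in $g_{B_d}$; for the second, $\widetilde{\ba}_{\tilde{B}}^H(g_{B_d})\widetilde{\ba}_{\tilde{B}}(g_{B_d})=1$ gives $\lA\widetilde{\ba}_{\tilde{B}}(g_{B_d})\widetilde{\ba}_{\tilde{B}}^H(g_{B_d})\bR_B\rA_F^2=\tr\!\left(\bR_B^H\widetilde{\ba}_{\tilde{B}}(g_{B_d})\widetilde{\ba}_{\tilde{B}}^H(g_{B_d})\bR_B\right)=\lA\widetilde{\ba}_{\tilde{B}}^H(g_{B_d})\bR_B\rA_2^2$. Dropping the constant term then leaves precisely \cref{eq:transform problem}, which is the claimed equivalence.

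The step that needs the most care is the set of $g_{B_d}$ at which $\ba_{\tilde{B}}(g_{B_d})\in\Col(\bA_{B_r})$: there $\bA_B^H\bA_B$ is singular, so the constraint in \cref{eq:problem equ} is literally ill-posed, and $\widetilde{\ba}_{\tilde{B}}(g_{B_d})$ is undefined. I would handle this either by interpreting the inverse as a pseudoinverse and checking that both objectives extend continuously across such points, where $\widetilde{\ba}_{\tilde{B}}^H(g_{B_d})\bR_B\to\bzero$ (so these can be maximizers only in the degenerate case $\Col(\bR_B)\subseteq\Col(\bA_{B_r})$), or simply by restricting the search to the open set where $\ba_{\tilde{B}}(g_{B_d})\notin\Col(\bA_{B_r})$. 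Everything else is routine linear algebra on orthogonal projections.
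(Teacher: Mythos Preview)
Your proposal is correct and follows essentially the same route as the paper: both arguments replace $\bA_B$ by an orthonormal basis of $\Col(\bA_B)$ obtained by orthogonalizing $\ba_{\tilde B}(g_{B_d})$ against $\Col(\bA_{B_r})$, then split the resulting projection energy into a constant part from $\bA_{B_r}$ and the variable part $\lA\widetilde{\ba}_{\tilde B}^H(g_{B_d})\bR_B\rA_2^2$. Your treatment of the degenerate set where $\ba_{\tilde B}(g_{B_d})\in\Col(\bA_{B_r})$ is more explicit than the paper's, which simply works under the standing distinctness assumption.
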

\begin{proof}

Note that $\bP_r \ba_{\tilde{B}}(g_{B_d})$ is the orthogonal projection onto the column space of $\bA_{B_r}$, and $\widetilde{\ba}_{\tilde{B}}(g_{B_d})$ is the residual vector of projection with normalization.
Therefore, $[\bA_{B_r},\ba_{\tilde{B}}(g_{B_d})]$ spans the same subspace as $[\bA_{B_r},\widetilde{\ba}_{\tilde{B}}(g_{B_d})]$.
For convenience, we define $\widetilde{\bA}_{B_r}$ as the Gram–Schmidt orthogonalization of columns in ${\bA}_{B_r}$.
We have
\begin{align*}
\Col([\bA_{B_r},\ba_{\tilde{B}}(g_{B_d}) ])&= \Col([\widetilde{\bA}_{B_r},\ba_{\tilde{B}}(g_{B_d}) ])\\
&= \Col([\widetilde{\bA}_{B_r},\widetilde{\ba}_{\tilde{B}}(g_{B_d}) ]),
\end{align*}
By defining $\widetilde{\bA}_B = [\widetilde{\bA}_{B_r},\widetilde{\ba}_{\tilde{B}}(g_{B_d}) ]$, one can check that $\widetilde{\bA}^H \widetilde{\bA} = \bI$.
Therefore, from the equivalence in subspaces, the residual of $\bR_B$ \gls{wrt} $\Col(\bA)$ is same as $\Col(\widetilde{\bA})$.
The objective function in \cref{eq:problem equ} is then given by
\begin{small}
\begin{align}
& \lA {\bR}_{B} - \widetilde{\bA}_B \widetilde{\bQ}_{B}  \rA_F^2 \nonumber\\
&= \tr({\bR}_{B}^H {\bR}_{B} -  {\bR}_{B}^H \widetilde{\bA}_B \widetilde{\bQ}_{B} -  \widetilde{\bQ}_{B}^H \widetilde{\bA}_B^H  {\bR}_{B} +\widetilde{\bQ}_{B}^H  \widetilde{\bA}_B^H  \widetilde{\bA}_B \widetilde{\bQ}_{B}) \nonumber\\
&= \tr({\bR}_{B}^H {\bR}_{B} -  {\bR}_{B}^H \widetilde{\bA}_B \widetilde{\bQ}_{B}) \nonumber \\
&= \tr({\bR}_{B}^H {\bR}_{B} -  {\bR}_{B}^H \widetilde{\bA}_B \widetilde{\bA}_B^H {\bR}_{B} ), \label{eq:obj equivalent}
\end{align}
\end{small}%
where $\widetilde{\bQ}_{B}=(\widetilde{\bA}_B^H \widetilde{\bA}_B)^{-1}\widetilde{\bA}_B^H {\bR}_{B}$, and  the last inequality comes from $ \widetilde{\bA}_B^H  \widetilde{\bA}_B  = \bI$.
Therefore, we have
\begin{small}
\begin{align*}
&\argmin_{g_{B_d}}\tr({\bR}_{B}^H {\bR}_{B} -  {\bR}_{B}^H \widetilde{\bA}_B \widetilde{\bA}_B^H {\bR}_{B})\\
&=\argmax_{g_{B_d}} \|  \widetilde{\bA}_B^H {\bR}_{B} \|_F^2 \\
&=\argmax_{g_{B_d}}  \|\widetilde{\ba}^H_{\tilde{B}}(g_{B_d}){\bR}_{B} \|_2^2,
\end{align*}
\end{small}%
which is exactly the problem provided in \cref{eq:transform problem}.
This concludes the proof.
\end{proof}

The variable of optimization $g_{B_d}$ in problem \cref{eq:transform problem} is scalar and various standard optimization techniques can be applied to find the optimal solution.
Suppose $\hat{g}_{B_d}$ is the optimal solution found. Let
\begin{align}\label{distinguish direct path}
\hat{\bQ}_B = \argmin_{\bQ_{B}} \|{\bR}_{B} - [\bA_{B_r},\ba_{\tilde{B}}(\hat{g}_{B_d}) ] \bQ_{B}  \|_F^2,
\end{align}
where $\hat{\bQ}_B = [ [\hat{\bQ}_B]_{r_1,:},\ldots,[\hat{\bQ}_B]_{r_Q,:},[\hat{\bQ}_B]_{d,:}]$.
Note that the values of $\|[\hat{\bQ}_B]_{d,:}]\|_2^2$ and $\|[\hat{\bQ}_B]_{r_q,:}]\|_2^2$ are related to the of energy of the direct and reflection paths.
Therefore, we can sort the paths according to the values of $\|[\hat{\bQ}_B]_{d,:}]\|_2^2$ and $\|[\hat{\bQ}_B]_{r_q,:}]\|_2^2$. We save the estimated order of the path energies as $\bs_t \in \R^{Q+1}$. This path order is utilized to distinguish the direct and reflection paths in the following subsections.

\subsection{Estimation of \txp{$\tau_d$}{tau-d} and \txp{$\{\tau_{r_2,q}\}_{q=1}^Q$}{tau-r2,q}}\label{subsec:est_delays}

We define
\begin{align}\label{eq:delay vec}
\ba_C(\tau) = [1,e^{-\iu2\pi \frac{W }{K} \tau},\ldots,e^{-\iu2\pi \frac{(K-1)W }{K} \tau}]
\end{align}
and reshape $\{ \tilde{\bR}_k \}_{k=0}^{K-1}$ over the dimensions of $\ba_C(\tau_d)$ and $\{\ba_C(\tau_{r_1,Q}+\tau_{r_2,Q})\}_{q=1}^Q$ to obtain
\begin{align}
{\bR}_D & = [\ba_C(\tau_d),\ba_C(\tau_{r_1,1}+\tau_{r_2,1}),\ldots,\ba_C(\tau_{r_1,Q}+\tau_{r_2,Q})] \bQ_D + \bN_D \in  \C^{K \times  DN }, \label{eq:expression RH}
\end{align}
where $\bQ_D \in \C^{2 \times DN}$ and $\bN_D \in \C^{K \times  DN }$. We use the multiple signal classification (MUSIC) method to estimate delays from the observations in \cref{eq:expression RH}.

Note that the column space of  ${\bR}_D$ is spanned by $\ba_C(\tau_d)$ and $\{\ba_C(\tau_{r_1,q}+\tau_{r_2,q})\}_{q=1}^Q$. Letting
$\bA_D = [\ba_C(\tau_d),\ba_C(\tau_{r_1,1}+\tau_{r_2,1}),\ldots,\ba_C(\tau_{r_1,Q}+\tau_{r_2,Q})] $, the covariance of \cref{eq:expression RH} is
\begin{align}
\bC_D  = \bA_D  \bQ_D \bQ_D ^H \bA^H +\tilde{ \sigma}^2 \bI. \label{eq:true delay cov}
\end{align}
Intuitively, when the noise level is low, the covariance matrix $\bC_D$ in \cref{eq:true delay cov} can be approximated by the covariance of the signal part, i.e., $\bA_D  \bQ_D \bQ_D ^H \bA_D^H$. This is the underlying methodology of MUSIC.
The covariance matrix in \cref{eq:true delay cov} can be estimated by using the sample correlation matrix
$\hat{\bC}_D =\bR_D \bR_D^H$.
Let $[\bw_1,\bw_2, \ldots, \bw_{ND}]$ be the eigenvectors of $\hat{\bC}_D$, where $\bw_i$ corresponds to the $i$th largest eigenvalue.
Then, letting $ \bW_{H}^c=[\bw_{Q+2},\bw_{Q+3},\ldots, \bw_{ND}]$, the estimation of the delays $\tau_d, \{\tau_{r_2,q}\}_{q=1}^Q$ is achieved by the following:
\begin{align} \label{eq:est delay music}
\text{find $Q+1$ peaks:}~~{1}/{\| \ba_C^H( \tau)
 \bW_{H}^c \|_2^2} \text{~with~} \tau \le K/W.
\end{align}

Suppose the estimated delays are $\{\hat{\tau}_i\}_{i=1}^{Q+1}$. A heuristic way to distinguish the delay for the direct path is to use the minimum delay estimated. However, this approach may result in errors when the SNR is low, as our simulation in \cref{sec:numerical} shows.
Therefore, we use \cref{distinguish direct path} instead to assign the delays for the direct and reflection paths. Specifically, after estimating $\{\hat{\tau}_i\}_{i=1}^{Q+1}$, denoting $\hat{\bA}_D=[\ba_C(\hat{\tau}_1),\ldots,\ba_C(\hat{\tau}_{Q+1})]$, we find
\begin{align*}
\hat{\bQ}_D
&=  \argmin_{\bQ_D}\| \bR_D - \hat{\bA}_D\bQ_D\|_F^2=(\hat{\bA}_D^H\hat{\bA}_D)^{-1}\hat{\bA}_D^H \bR_D.
\end{align*}
Then, we use the path ordering $\bs_t$ from the sorting of \cref{distinguish direct path} and $\{\| [\hat{\bQ}_D]_{:,i} \|^2_2\}_{i=1}^{Q+1}$ to assign the estimated delays to path indices. Let the matched estimated delays be $\{\hat{\tau}_d,\hat{\tau}_{r_2,1},\ldots,\hat{\tau}_{r_2,Q}\}$.

\subsection{Estimation of AoAs \txp{$(g_{U_r,q},v_{U_r,q})$}{g\_Ur,v\_Urq} and \txp{$(g_{U_d},v_{U_d})$}{g\_Ud,v\_Ud}}

We present only the method to estimate $g_{U_r,q}$ and $g_{U_d}$. The same approach can be applied to the estimation of $v_{U_r,q}$ and $v_{U_d}$.
we reshape $\{ \tilde{\bR}_k \}_{k=0}^{K-1}$ over the dimension of $\ba_{\tilde{U}}(g_{U_d})$ and $\{\ba_{\tilde{U}}(g_{U_r,q})\}_{q=1}^Q$ as ${\bR}_{U} \in \C^{\sqrt{D} \times \sqrt{D} NK}$,
\begin{align}
{\bR}_{U} \!=\!\left[\ba_{\tilde{U}}(g_{U_d}),\ba_{\tilde{U}}(g_{U_r,1}),\ldots,\ba_{\tilde{U}}(g_{U_r,Q}) \right] \bQ_{U} +\tilde{\bN}_{U},
\end{align}
where $\bQ_{U}\in \C^{2 \times \sqrt{D} N K}$ and $\tilde{\bN}_{U} \in \C^{\sqrt{D} \times \sqrt{D} NK }$. Note that the signal part of the column space of $\tilde{\bR}_{U}$ is spanned by $\bA_U = \left[\ba_{\tilde{U}}(g_{U_r,1}),\ldots,\ba_{\tilde{U}}(g_{U_r,Q}),\ba_{\tilde{U}}(g_{U_d}) \right]$. Similar to \cref{subsec:est_delays}, we utilize the MUSIC method and \cref{distinguish direct path} to obtain the estimation of $\{ g_{U_r,q}\}_{q=1}^Q$ and $g_{U_d}$ as $\{\hat{g}_{U_r,q}\}_{q=1}^Q$ and $\hat{g}_{U_d}$.
Using the same technique, we can obtain the estimated $\{ v_{U_r,q}\}_{q=1}^Q$ and $v_{U_d}$ as $\{ \hat{v}_{U_r,q}\}_{q=1}^Q$ and $\hat{v}_{U_d}$, respectively.

\subsection{Estimation of \txp{$h_d$}{hd} and \txp{$\{h_r\}_{q=1}^Q$}{hq}}

Using the estimates $(\hat{\tau}_d, \hat{g}_{U_d},\hat{v}_{U_d}, \hat{g}_{B_d}, \hat{v}_{B_d})$, and  $\{\hat{\tau}_{r_2,q}, \hat{g}_{U_r,q},\hat{v}_{U_r,q}\}_{q=1}^Q$, we solve the following problem to estimate $h_d$ and $h_r$:
\begin{small}
\begin{align}
&\hat{\bh}=\argmin_{\bh} \sum_{k=0}^{K-1}\| \tilde{\bR}_k - \widebar{\bH}_k \|_F^2 \nonumber \\
&\text{subject to} ~({\tau}_d,{g}_{U_d},{v}_{U_d}, {g}_{B_d},{v}_{B_d}) = (\hat{\tau}_d,\hat{g}_{U_d},\hat{v}_{U_d}, \hat{g}_{B_d}, \hat{v}_{B_d})\nonumber \\
&~~~~~~~~~~~~ ({\tau}_{r_2,q},{g}_{U_r,q},{v}_{U_r,q}) = (\hat{\tau}_{r_2,q},\hat{g}_{U_r,q},\hat{v}_{U_r,q}), \label{estimate gain problem}
\end{align}
\end{small}%
where $\hat{\bh}$ is the estimate of $\bh=[{h_d,{h}_{r,1},\ldots,{h}_{r,Q}}]\T$.
From the formulation in \cref{eq:R observations}, we can vectorize the matrix as
\begin{align*}
  {\br}_s&=\vect([ \tilde{\bR}_0,\ldots,\tilde{\bR}_{K-1}]),
  \hat{\br}_d=\vect( [ \widebar{\bH}_{d,0},\ldots,\widebar{\bH}_{d,K-1}]),\\
 \hat{\br}_{r,q}&= \vect([ \bH_{r,0,q} \bTheta_q \bG_{0,q},\ldots,\bH_{r,K-1,q} \bTheta_q \bG_{K-1,q}]).
\end{align*}
Then, the formulated problem in \eqref{estimate gain problem} can be expressed as
\begin{align*}
\hat{\bh}=\argmin_{\bh} \| \br_s - \hat{\bR}
\bh \|_2^2,
\end{align*}
where $\hat{\bR}= [\hat{\br}_d,\hat{\br}_{r,1},\ldots,\hat{\br}_{r,Q}]$.
The solution is $\bh=(\hat{\bR}\T \hat{\bR})^{-1} \hat{\bR}^H\br_s$.

\section{UE Position Estimation} \label{sec:fusion}
In this section, we present a fusion method to infer the position of the UE from the estimated channel parameters.

\subsection{Fusion via Linear Combination} \label{sec:linear fusion}
Recall that the error covariance matrices of $\hat{\bp}_{U}^{(d)}$ and $\{\hat{\bp}_{U}^{(r,q)}\}_{q=1}^Q$ are given by \cref{eq:cov_direct} and \cref{eq:cov_reflect}, respectively.
The following lemma presents the proposed fusion method based on the error covariance matrices.

\begin{Lemma}\label{lem:blue}
Assume that the estimations of $\hat{\bp}_{U}^{(d)}$ and $\{\hat{\bp}_{U}^{(r,q)}\}_{q=1}^Q$ are based on independent measurements, and the covariance matrices are in \cref{eq:cov_direct} and \cref{eq:cov_reflect}. Then, the optimal linear combination of $\hat{\bp}_{U}^{(d)}$ and $\{\hat{\bp}_{U}^{(r,q)}\}_{q=1}^Q$ is given by
\begin{align} \label{eq:linear combination}
\hat{\bp}_{U} = {\bC}_{\bp_U}(   ( \bC_{{\bp}_{U}}^{(d)} )^{-1} \hat{\bp}_{U_d}  + \sum_{q=1}^{Q}(\bC_{{\bp}_{U}}^{(r,q)})^{-1} \hat{\bp}_{U_r,q}),
\end{align}
where ${\bC}_{\bp_U} = ( ( \bC_{{\bp}_{U}}^{(d)} )^{-1}+  \sum_{q=1}^{Q} (\bC_{{\bp}_{U}}^{(r,q)})^{-1} )^{-1}$.
\end{Lemma}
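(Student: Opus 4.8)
The plan is to recognize this as the matrix-weighted best linear unbiased estimator (Gauss--Markov) problem for a collection of mutually independent, (asymptotically) unbiased estimators, and to establish optimality in the Loewner order by completing the square. Index the paths by $p \in \calP := \{d\} \cup \{(r,q) : q = 1,\ldots,Q\}$, write the per-path error covariances as $\bC_p$ (so $\bC_d = \bC_{\bp_U}^{(d)}$ and $\bC_{(r,q)} = \bC_{\bp_U}^{(r,q)}$, both positive definite since the FIMs in \cref{pro:crb bound} are nonsingular), and consider a generic linear estimator $\hat{\bp}_U = \sum_{p\in\calP} \bW_p \hat{\bp}_U^{(p)}$ with deterministic $\bW_p \in \R^{3\times 3}$. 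Because each $\hat{\bp}_U^{(p)}$ is unbiased for $\bp_U$, unbiasedness of $\hat{\bp}_U$ is equivalent to the affine constraint $\sum_{p\in\calP}\bW_p = \bI$, under which $\hat{\bp}_U - \bp_U = \sum_{p\in\calP}\bW_p(\hat{\bp}_U^{(p)} - \bp_U)$.

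Using the independence assumption, all cross-covariances between distinct paths vanish, so the error covariance of $\hat{\bp}_U$ collapses to $\bC(\{\bW_p\}) := \sum_{p\in\calP}\bW_p \bC_p \bW_p\T$. The problem is therefore to minimize $\bC(\{\bW_p\})$, in the positive-semidefinite order, over all weight families summing to $\bI$; this simultaneously handles the trace/MSE criterion and any monotone scalarization, so I would phrase ``optimal'' in this strong sense.

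For the minimization I would complete the square rather than use Lagrange multipliers. Put $\bW_p^\star := \bC_{\bp_U}\bC_p^{-1}$ with $\bC_{\bp_U} := (\sum_{p\in\calP}\bC_p^{-1})^{-1}$, and note $\sum_p \bW_p^\star = \bC_{\bp_U}\sum_p \bC_p^{-1} = \bI$ and $\bC(\{\bW_p^\star\}) = \bC_{\bp_U}(\sum_p \bC_p^{-1})\bC_{\bp_U} = \bC_{\bp_U}$. For an arbitrary feasible family, write $\bW_p = \bW_p^\star + \bDelta_p$ with $\sum_p \bDelta_p = \bzero$; the cross terms $\sum_p \bW_p^\star \bC_p \bDelta_p\T = \bC_{\bp_U}\sum_p\bDelta_p\T = \bzero$ (and its transpose) drop out, leaving $\bC(\{\bW_p\}) = \bC_{\bp_U} + \sum_{p\in\calP}\bDelta_p \bC_p \bDelta_p\T \succeq \bC_{\bp_U}$, with equality iff every $\bDelta_p = \bzero$ by positive definiteness of the $\bC_p$. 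Hence $\bW_p = \bW_p^\star$ is the unique optimum, and substituting into the estimator gives $\hat{\bp}_U = \bC_{\bp_U}\sum_{p\in\calP}\bC_p^{-1}\hat{\bp}_U^{(p)}$, which is precisely \cref{eq:linear combination} after unpacking the index set.

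The main obstacle is conceptual rather than computational: pinning down the exact sense of ``optimal'' and the standing assumptions. I would make explicit at the outset that optimality is meant over unbiased linear combinations and in the Loewner order, flag that the per-path unbiasedness is exact only asymptotically (consistent with the large-sample framing elsewhere in the paper), and invoke the nonsingularity of $\bC_{\bp_U}^{(d)}$ and $\bC_{\bp_U}^{(r,q)}$ --- needed to define the weights, form $\bC_{\bp_U}$, and obtain the equality condition --- from the CRB analysis of \cref{pro:crb bound}.
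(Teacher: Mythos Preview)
Your proof is correct, but it takes a different route from the paper. The paper sets up the same linear combination and unbiasedness constraint $\bA_d + \sum_{q} \bB_q = \bI$, then minimizes the scalar MSE $\tr(\E[(\hat{\bp}_U - \bp_U)(\hat{\bp}_U - \bp_U)\T])$ directly by taking first-order derivatives of the constrained objective and solving for the weights. You instead complete the square to show optimality in the Loewner order, which is strictly stronger: it implies the trace criterion used in the paper as well as any monotone scalarization, and your decomposition $\bC(\{\bW_p\}) = \bC_{\bp_U} + \sum_p \bDelta_p \bC_p \bDelta_p\T$ yields uniqueness of the optimal weights for free. The paper's derivative-based argument is shorter and matches its stated MSE objective exactly, while yours is more self-contained (no calculus, no Lagrange multipliers) and gives a cleaner characterization of what ``optimal'' means here.
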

\begin{proof}
To obtain a linear combination of $\hat{\bp}_{U}^{(d)}$ and $\{\hat{\bp}_{U}^{(r,q)}\}_{q=1}^Q$ as the estimate of the UE position, we let the expression of the estimation of UE position be
\begin{align}
\hat{\bp}_U &= \bA_d \hat{\bp}_{U}^{(d)} +\sum_{q=1}^{Q}\bB_q \hat{\bp}_{U}^{(r,q)}, \label{eq:pos user expression}
\end{align}
where $\bA_d \in \C^{3\times 3}$ and $\bB_q\in \C^{3\times 3}, \forall q$.
In order to obtain an unbiased estimator, it must have $\bA_d +\sum_{q=1}^{Q}\bB_q = \bI$.
To minimize the MSE of $\hat{\bp}_U$, we need to solve the following problem:
\begin{align}
\min _{\bA_d, \{\bB_q \}_{q=1}^Q} &  \tr(\Eb \left[(\hat{\bp}_U - \bp_U) (\hat{\bp}_U - \bp_U)\T\right]) \nn
\text{subject to~} & \bA_d +  \sum_{q=1}^{Q}  \bB_q = \bI. \label{eq:problem blue}
\end{align}
Substituting the expression of \cref{eq:pos user expression} and taking first order derivative of the objective function in \cref{eq:problem blue} give
\begin{align*}
\bA_d ={\bC}_{\bp_U} ({\bC}_{{\bp}_{U}}^{(d)})^{-1},~\bB_q = {\bC}_{\bp_U} (\bC_{{\bp}_{U}}^{(r,q)})^{-1}.
\end{align*}
This concludes the proof.
\end{proof}
\begin{Remark} \label{remark:bound of Cpu}
Since we assume independence among the UE estimations from these paths, according to \cref{pro:crb bound}, we have the following bounds
\begin{align}
{\bC}_{{\bp}_{U}}^{(d)}& \succeq {\left[ \left(\bJ_d\T \bF_{\hat{\bmeta}_d}  \bJ_d \right)^{-1} \right]_{1:3,1:3}}=\tilde{\bC}_{{\bp}_{U}}^{(d)}, \label{eq:def_Cd}\\
{\bC}_{{\bp}_{U}}^{(r,q)}& \succeq {\left[ \left(\bJ_{r,q}\T \bF_{\hat{\bmeta}_{r,q}}  \bJ_{r,q} \right)^{-1} \right]_{1:3,1:3}}=\tilde{\bC}_{{\bp}_{U}}^{(r,q)},\label{eq:def_Cr}
\end{align}
where we denote the bounds as $\tilde{\bC}_{{\bp}_{U}}^{(d)}$ and $\tilde{\bC}_{{\bp}_{U}}^{(r,q)}$. We define $\tilde{\bC}_{\bp_U} = ( ( \tilde{\bC}_{{\bp}_{U}}^{(d)} )^{-1}+  \sum_{q=1}^{Q} (\tilde{\bC}_{{\bp}_{U}}^{(r,q)})^{-1} )^{-1}$. Therefore, when the exact error covariances in \cref{eq:linear combination} are not available, we can employ the lower bounds in \cref{eq:def_Cd} and \cref{eq:def_Cr},
\begin{align}\label{eq:linear combination bound}
\hat{\bp}_{U} = \tilde{\bC}_{\bp_U} (  (\tilde{\bC}^{(d)}_{\bp_U})^{-1} \hat{\bp}_{U}^{(d)}  +   \sum_{q=1}^{Q} (\tilde{\bC}^{(r,q)}_{\bp_U})^{-1}\hat{\bp}_{U}^{(r,q)}).
\end{align}
\end{Remark}

\subsection{Asymptotic MLE} \label{sect:asy MLE}
We now show that the proposed linear combination in \cref{eq:linear combination bound} is approximately the \gls{MLE} in the asymptotic regime of large sample size.
We first introduce the extended invariance principle (EXIP), which is asymptotically equivalent to the \gls{MLE}.
Then we show that the proposed linear combination is approximately the optimal solution of EXIP.

\begin{Theorem}[EXIP theorem\cite{stoica1989reparametrization}]\label{thm:EXIP}
Suppose the loss function for estimating the parameters $\bxi $ is given by $L(\by; \bxi)$, where $\by\in \R^{Z\times 1}$ are observations. Suppose there exists a function $\bmeta = F(\bxi)$ with loss function
$L(\by; \bmeta) = L(\by; F(\bxi) )=L(\by; \bxi)$.
The estimation of $\bxi$ and $\bmeta$ are given by
\begin{align*}
\hat{\bxi} &= \argmin L(\by; \bxi),\ \hat{\bmeta} =\argmin L(\by; \bmeta).
\end{align*}
If
$
\lim_{N \rightarrow \infty}\hat{\bmeta} = \lim_{N \rightarrow \infty}F(\hat{\bxi}),
$
then
\begin{align} \label{eq:EXIP form}
 \breve{\bxi} = \argmin_{\bxi} [\hat{\bmeta}-F(\bxi)]\T\bW [\hat{\bmeta}-F(\bxi)]
\end{align}
is asymptotically equivalent to $\hat{\bxi}$ as $Z\to\infty$, where $\bW $ is
\begin{align*}
\bW = \evalat*{\Eb\left[\frac{\partial^2 L(\by; \bmeta)}{\partial \bmeta \bmeta\T}\right]}_{\bmeta = \hat{\bmeta}}.
\end{align*}
\end{Theorem}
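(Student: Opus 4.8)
The plan is to compare the first-order stationarity conditions of the three estimators $\hat{\bmeta}$, $\hat{\bxi}$, and $\breve{\bxi}$ by Taylor-expanding each about the true parameter, and then to show that $\breve{\bxi}$ and $\hat{\bxi}$ carry the same leading-order stochastic term, hence converge to the same limit at the same rate. Throughout I write $\bmeta_0 = F(\bxi_0)$ for the true values, let $\bg(\bmeta) = \partial L(\by;\bmeta)/\partial\bmeta$ be the score in the $\bmeta$-parametrization, and recall $\bJ = \partial F(\bxi)/\partial\bxi\T$; by the chain rule the score in the $\bxi$-parametrization is $\bJ\T\bg(F(\bxi))$, and $\bW$ is the large-sample limit of the Hessian $\partial^2 L/\partial\bmeta\,\partial\bmeta\T$ (which is what $\bbE[\partial^2 L/\partial\bmeta\,\partial\bmeta\T]|_{\bmeta=\hat{\bmeta}}$ converges to). The hypothesis $\lim\hat{\bmeta} = \lim F(\hat{\bxi})$, combined with standard $M$-estimation consistency, makes $\hat{\bmeta}$, $F(\hat{\bxi})$ and $F(\breve{\bxi})$ all consistent for $\bmeta_0$, which is what legitimizes the expansions below.

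First I would record the classical expansion of $\hat{\bmeta}$: from $\bg(\hat{\bmeta}) = \bzero$, a first-order Taylor expansion about $\bmeta_0$, with the sample Hessian replaced by its limit $\bW$ via a uniform law of large numbers, gives
\[
\hat{\bmeta} - \bmeta_0 = -\,\bW^{-1}\bg(\bmeta_0) + o_p(\cdot).
\]
Applying the same argument to the $\bxi$-stationarity condition $\bJ(\hat{\bxi})\T\bg(F(\hat{\bxi})) = \bzero$ (and using $\bJ(\hat{\bxi})\to\bJ$ and $F(\hat{\bxi})-F(\bxi_0)\approx\bJ(\hat{\bxi}-\bxi_0)$) yields
\[
\hat{\bxi} - \bxi_0 = -\,(\bJ\T\bW\bJ)^{-1}\bJ\T\bg(\bmeta_0) + o_p(\cdot).
\]
Finally I would differentiate the EXIP objective $[\hat{\bmeta}-F(\bxi)]\T\bW[\hat{\bmeta}-F(\bxi)]$ and set its gradient to zero, obtaining $\bJ(\breve{\bxi})\T\bW[\hat{\bmeta}-F(\breve{\bxi})] = \bzero$; expanding $F(\breve{\bxi}) = \bmeta_0 + \bJ(\breve{\bxi}-\bxi_0) + o_p(\cdot)$ gives
\[
\breve{\bxi} - \bxi_0 = (\bJ\T\bW\bJ)^{-1}\bJ\T\bW\,(\hat{\bmeta}-\bmeta_0) + o_p(\cdot).
\]

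To conclude, I would substitute the first display into the last one: the product $\bW^{-1}\bW$ collapses and one is left with $\breve{\bxi} - \bxi_0 = -(\bJ\T\bW\bJ)^{-1}\bJ\T\bg(\bmeta_0) + o_p(\cdot)$, which is exactly the expansion obtained for $\hat{\bxi} - \bxi_0$. Therefore $\breve{\bxi} - \hat{\bxi} = o_p(\cdot)$, and after the usual normalization by the square root of the sample size the two estimators have the same limiting distribution, i.e., they are asymptotically equivalent. It is worth noting precisely where the choice $\bW = $ (Hessian of $L$) is used: it is what produces the cancellation $\bW^{-1}\bW = \bI$; any other weight would leave behind a nontrivial factor and destroy the equivalence, which is the whole point of the theorem.

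The hard part is not the algebra above but the rigor of the asymptotics: one must establish joint consistency of $\hat{\bmeta}$, $\hat{\bxi}$ and $\breve{\bxi}$, bound the Taylor remainders uniformly so that second derivatives of $F$ and third derivatives of $L$ contribute only $o_p$ terms, and justify replacing the random sample Hessian of $L$ by its deterministic limit $\bW$ via continuity and a uniform law of large numbers. These are the standard regularity conditions of $M$-estimation theory; in the present application they hold because $L$ is the Gaussian negative log-likelihood of the model in \cref{eq:R observations}, which is smooth (indeed quadratic in the noise) in $\bmeta$, while $F$ in \cref{eq:F} is smooth away from degenerate geometries such as $\bp_U = \bzero$ or $\bp_U = \bp_{R,q}$. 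Since the statement is quoted from \cite{stoica1989reparametrization}, the complete verification of these conditions may be taken from there.
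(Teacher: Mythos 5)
The paper does not prove this theorem: it is imported verbatim from \cite{stoica1989reparametrization} and used as a black box, so there is no internal proof to compare yours against. Your sketch is the standard argument from that reference — first-order stationarity expansions of $\hat{\bmeta}$, $\hat{\bxi}$, and $\breve{\bxi}$ about the true parameter, followed by the observation that the choice $\bW = $ limiting Hessian makes $\bW^{-1}\bW$ collapse so that $\breve{\bxi}$ and $\hat{\bxi}$ share the leading-order term $-(\bJ\T\bW\bJ)^{-1}\bJ\T\bg(\bmeta_0)$ — and the algebra checks out. Two small points worth making explicit if you were to write this up fully: (i) the hypothesis $\lim\hat{\bmeta} = \lim F(\hat{\bxi})$ is precisely what guarantees that the extended minimizer $\hat{\bmeta}$ is consistent for $\bmeta_0 = F(\bxi_0)$ rather than for some other point of the larger parameter set, and your expansions silently rely on this; (ii) the theorem evaluates the expected Hessian at $\hat{\bmeta}$ rather than at $\bmeta_0$, so continuity of the expected Hessian plus consistency is needed to identify it with your limit $\bW$. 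You flag the regularity conditions honestly and defer them to the cited reference, which is exactly what the paper itself does.
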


In \cref{eq:F}, the UE position parameters are related to the channel parameters $\bmeta$ via a function $F(\cdot)$. We can apply the EXIP approach to obtain the UE position estimate from the channel parameter estimates. Specifically, from the estimator $\hat{\bmeta}$ in \cref{sec:estimation}, applying \cref{thm:EXIP}, we can solve the following weighted least squares problem
\begin{align}
\breve{\bxi} = \argmin_{\bxi} [\hat{\bmeta}-F(\bxi)]\T\bW [\hat{\bmeta}-F(\bxi)], \label{eq:UE position estimation}
\end{align}
where the weight matrix $\bW $ is given by
\begin{align*}
 \bW = \evalat*{\Eb\left[\frac{\partial^2 L(\{\tilde{\bR}_k\}_{k=1}^K; \bmeta)}{\partial \bmeta \bmeta\T}\right]}_{\bmeta = \hat{\bmeta}}=\bF_{\hat{\bmeta}},
\end{align*}
and the loss function $L(\{\tilde{\bR}_k\}_{k=1}^K; \bmeta) = \sum_{k=1}^{K}\| \tilde{\bR}_k - \widebar{\bH}_k \|_F^2$ (see \cref{original eta problem}).
Note that the inference model in \cite{Lin2021Channel,Jing2020Positioning} with geometric mapping is equivalent to letting $\bW = \bI$ in \cref{eq:UE position estimation}.
In particular, the gradient-based method can be utilized to find the optimum in \cref{eq:UE position estimation}, which is, however, sensitive to the initialization.
In what follows, we will show that \cref{eq:linear combination bound} is the approximate solution of the problem \cref{eq:UE position estimation}.

Let $F_d$ and $F_{r,q}$, $q=1,\ldots,Q$, be functions such that $\bmeta_d=F_d(\bxi_{d})$ and $\bmeta_{r,q}=F_{r,q}(\bxi_{r,q})$. The objective function in \cref{eq:UE position estimation} is approximate to the following,
\begin{align}
&(\hat{\bmeta}-F(\bxi))\T\bW (\hat{\bmeta}-F(\bxi)) \nonumber \\
 &\approx
  \begin{bmatrix}
   F_d(\hat{\bxi}_d)-F_d(\bxi_d) \\
  F_{r,1}(\hat{\bxi}_{r,1})-F_{r,1}(\bxi_{r,1}) \\
  \vdots \\
  F_{r,1}(\hat{\bxi}_{r,Q})-F_{r,Q}(\bxi_{r,Q}) \\
 \end{bmatrix}\T
\bF_{\hat{\bmeta}}
  \begin{bmatrix}
   F_d(\hat{\bxi}_d)-F_d(\bxi_d) \\
  F_{r,1}(\hat{\bxi}_{r,1})-F_{r,1}(\bxi_{r,1}) \\
  \vdots \\
  F_{r,1}(\hat{\bxi}_{r,Q})-F_{r,Q}(\bxi_{r,Q}) \\
 \end{bmatrix}\nonumber \\
  &\approx
   \begin{bmatrix}
   \bJ_d(\hat{\bxi}_d-\bxi_d) \\
 \bJ_{r,1}(\hat{\bxi}_{r,1}-\bxi_{r,1}) \\
  \vdots \\
  \bJ_{r,1}(\hat{\bxi}_{r,Q}-\bxi_{r,Q}) \\
 \end{bmatrix}\T
\bF_{\hat{\bmeta}}
  \begin{bmatrix}
   \bJ_d(\hat{\bxi}_d-\bxi_d) \\
  \bJ_{r,1}(\hat{\bxi}_{r,1}-\bxi_{r,1}) \\
  \vdots \\
  \bJ_{r,1}(\hat{\bxi}_{r,Q}-\bxi_{r,Q}) %
 \end{bmatrix},\label{eq:partial decom EXIP} %
\end{align}%
where $\hat{\bxi}_d$ is inferred from $\hat{\bmeta}_{d}$, and  $\hat{\bxi}_{r,q}$ is inferred from $\hat{\bmeta}_{r,q}$.
Further details are in \cref{sect: cal pU}.
The first approximation is from that $F_d(\hat{\bxi}_d)\approx \hat{\bmeta}_d$ and $F_{r,q}(\hat{\bxi}_{r,q})\approx \hat{\bmeta}_{r,q}$.
The second approximation in \cref{eq:partial decom EXIP} holds from the Taylor series expansion.
Letting the first order derivative of \cref{eq:partial decom EXIP} be zero gives
\begin{align}
\hat{\bxi} ={( \bJ\T\bF_{\hat{\bmeta}}\bJ)^{-1}
\bJ\T
\bF_{\hat{\bmeta}}}
\begin{bmatrix}
   \bJ_d \hat{\bxi}_d \\
  \bJ_{r,1}\hat{\bxi}_{r,1} \\
  \vdots \\
  \bJ_{r,Q}\hat{\bxi}_{r,Q}
 \end{bmatrix}.   \label{eq:approximate EXIP}
\end{align}
Therefore, the solution in \cref{eq:approximate EXIP} is the approximate solution of \cref{eq:UE position estimation}, which is asymptotically \gls{MLE} in the large-sample regime.

The following proposition shows that the optimal linear combination in \cref{eq:linear combination bound} is equivalent to the solution in \cref{eq:approximate EXIP} when the paths are independent.

\begin{Proposition} \label{lem:equivalence}
Suppose the paths are independent, in other words, $\bF_{\hat{\bmeta}}$ has the following form
\begin{align}
\bF_{\hat{\bmeta}} =
\begin{bmatrix}
  \bF_{\hat{\bmeta}_d} & \boldsymbol{0}& \cdots & \boldsymbol{0}\\
   \boldsymbol{0} & \bF_{\hat{\bmeta}_{r,1}} & \cdots & \boldsymbol{0} \\
   \vdots & \vdots& \ddots& \vdots\\
      \boldsymbol{0} &    \boldsymbol{0} & \cdots & \bF_{\hat{\bmeta}_{r,Q}}
\end{bmatrix}.
\end{align}
Then, the solution in \cref{eq:approximate EXIP} is equivalent to \cref{eq:linear combination bound}, that is
$[\hat{\bxi}]_{1:3} =\tilde{\bC}_{\bp_U} (  (\tilde{\bC}^{(d)}_{\bp_U})^{-1} \hat{\bp}_{U}^{(d)}  +   \sum_{q=1}^{Q} (\tilde{\bC}^{(r,q)}_{\bp_U})^{-1}\hat{\bp}_{U}^{(r,q)}).$

\end{Proposition}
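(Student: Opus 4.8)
The plan is to prove the identity by pure linear algebra, turning the normal equations underlying \cref{eq:approximate EXIP} into a small $3\times 3$ system by eliminating the channel-gain unknowns. First I would introduce the $0/1$ selection matrices $\bS_d,\bS_{r,q}\in\R^{5\times(5+2Q)}$ defined by $\bxi_d=\bS_d\bxi$ and $\bxi_{r,q}=\bS_{r,q}\bxi$; each one extracts the common position block $\bp_U$ together with that path's own gain pair. Since $\bmeta_d=F_d(\bS_d\bxi)$ and $\bmeta_{r,q}=F_{r,q}(\bS_{r,q}\bxi)$, the full Jacobian of $F$ is the row-stacked matrix $\bJ=[\,(\bJ_d\bS_d)\T,(\bJ_{r,1}\bS_{r,1})\T,\dots,(\bJ_{r,Q}\bS_{r,Q})\T\,]\T$, and the stacked vector on the right of \cref{eq:approximate EXIP} is $[\,(\bJ_d\hat{\bxi}_d)\T,\dots,(\bJ_{r,Q}\hat{\bxi}_{r,Q})\T\,]\T$ with $\hat{\bxi}_d=[\,(\hat{\bp}_U^{(d)})\T,\hat{\bh}_d\T\,]\T$ and $\hat{\bxi}_{r,q}=[\,(\hat{\bp}_U^{(r,q)})\T,\hat{\bh}_{r,q}\T\,]\T$ the per-path estimates ($\hat{\bh}_d,\hat{\bh}_{r,q}\in\R^2$ collecting the real and imaginary parts of the gains). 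Feeding the block-diagonal $\bF_{\hat{\bmeta}}=\diag(\bF_{\hat{\bmeta}_d},\bF_{\hat{\bmeta}_{r,1}},\dots,\bF_{\hat{\bmeta}_{r,Q}})$ into \cref{eq:approximate EXIP} makes $\bJ\T\bF_{\hat{\bmeta}}\bJ$ a sum over blocks, so the normal equations become
\[
\Big(\bS_d\T\bG_d\bS_d+\sum_{q=1}^{Q}\bS_{r,q}\T\bG_{r,q}\bS_{r,q}\Big)\hat{\bxi}
=\bS_d\T\bG_d\hat{\bxi}_d+\sum_{q=1}^{Q}\bS_{r,q}\T\bG_{r,q}\hat{\bxi}_{r,q},
\]
with $\bG_d:=\bJ_d\T\bF_{\hat{\bmeta}_d}\bJ_d\in\R^{5\times5}$ and $\bG_{r,q}:=\bJ_{r,q}\T\bF_{\hat{\bmeta}_{r,q}}\bJ_{r,q}\in\R^{5\times5}$.

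Next I would write this system in arrowhead form and eliminate the gains. Partition each of these matrices as $\bG=\smat{\bG^{pp}&\bG^{ph}\\\bG^{hp}&\bG^{hh}}$ with $\bG^{pp}\in\R^{3\times3}$ the position block; then the coefficient matrix has position block $\bG_d^{pp}+\sum_q\bG_{r,q}^{pp}$, is block-diagonal in the gains (this is where the independence of the paths enters), and couples $\bp_U$ to the direct gain through $\bG_d^{ph}$ and to the $q$th reflected gain through $\bG_{r,q}^{ph}$. Solving the gain rows gives, for the direct path, $[\hat{\bxi}]_{h_d}=\hat{\bh}_d+(\bG_d^{hh})^{-1}\bG_d^{hp}\big(\hat{\bp}_U^{(d)}-[\hat{\bxi}]_{1:3}\big)$ (here $[\hat{\bxi}]_{h_d}$ denotes the $\bh_d$-block of $\hat{\bxi}$), and analogously for each reflected gain. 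Substituting these into the position row, the terms $\bG_d^{ph}\hat{\bh}_d$ and $\bG_{r,q}^{ph}\hat{\bh}_{r,q}$ cancel against the matching terms on the right-hand side, leaving the reduced $3\times3$ system
\[
\Big(\widetilde{\bG}_d+\sum_{q=1}^{Q}\widetilde{\bG}_{r,q}\Big)[\hat{\bxi}]_{1:3}
=\widetilde{\bG}_d\,\hat{\bp}_U^{(d)}+\sum_{q=1}^{Q}\widetilde{\bG}_{r,q}\,\hat{\bp}_U^{(r,q)},
\]
where $\widetilde{\bG}_d:=\bG_d^{pp}-\bG_d^{ph}(\bG_d^{hh})^{-1}\bG_d^{hp}$ and $\widetilde{\bG}_{r,q}:=\bG_{r,q}^{pp}-\bG_{r,q}^{ph}(\bG_{r,q}^{hh})^{-1}\bG_{r,q}^{hp}$ are the Schur complements of the gain blocks.

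Finally I would identify the coefficients with the quantities in \cref{eq:linear combination bound}. By the standard Schur-complement formula for a block inverse, $[\bG_d^{-1}]_{1:3,1:3}=\widetilde{\bG}_d^{-1}$, so $\widetilde{\bG}_d=(\tilde{\bC}_{\bp_U}^{(d)})^{-1}$ by the definition in \cref{eq:def_Cd}; likewise $\widetilde{\bG}_{r,q}=(\tilde{\bC}_{\bp_U}^{(r,q)})^{-1}$ by \cref{eq:def_Cr}. Since $\bF_{\hat{\bmeta}}\succ0$ and the Jacobians have full column rank, the reduced coefficient matrix is invertible with inverse exactly $\tilde{\bC}_{\bp_U}$, and solving the last display yields $[\hat{\bxi}]_{1:3}=\tilde{\bC}_{\bp_U}\big((\tilde{\bC}_{\bp_U}^{(d)})^{-1}\hat{\bp}_U^{(d)}+\sum_{q=1}^{Q}(\tilde{\bC}_{\bp_U}^{(r,q)})^{-1}\hat{\bp}_U^{(r,q)}\big)$, which is precisely \cref{eq:linear combination bound}. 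The only delicate point is the Schur-complement bookkeeping in the elimination step; once the cancellation of the gain cross-terms is noticed, the remainder is a routine substitution.
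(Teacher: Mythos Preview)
Your proposal is correct and follows essentially the same approach as the paper: both exploit the block-diagonal structure of $\bF_{\hat{\bmeta}}$ and the arrowhead structure of $\bJ\T\bF_{\hat{\bmeta}}\bJ$ (position block shared, gain blocks path-specific), and both arrive at the identification $(\tilde{\bC}_{\bp_U}^{(d)})^{-1}=(\bJ_d^{(p)})\T\bF_{\hat{\bmeta}_d}^{(p)}\bJ_d^{(p)}-(\bJ_d^{(p)})\T\bF_{\hat{\bmeta}_d}^{(p,h)}(\bF_{\hat{\bmeta}_d}^{(h)})^{-1}\bF_{\hat{\bmeta}_d}^{(h,p)}\bJ_d^{(p)}$ via the same Schur-complement formula. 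The only difference is cosmetic: the paper slices out rows $1{:}3$ of the closed-form product and verifies the two identities directly, whereas you recast \cref{eq:approximate EXIP} as normal equations and eliminate the gain unknowns---arguably a cleaner narrative, but mathematically the same computation.
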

\begin{proof}
See \cref{app:equivalence}.
\end{proof}

In summary, we have shown that the optimal linear combination in \cref{eq:linear combination bound} is approximate to the optimal solution of EXIP method when the paths are independent.
Therefore, it is approximately equivalent to the MLE in large-sample region .

\subsection{Estimation of \txp{$\hat{\bp}_{U}^{(d)}$}{p\_Ud} and  \txp{$\{\hat{\bp}_{U}^{(r,q)}\}_{q=1}^Q$}{p\_Uq}}  \label{sect: cal pU}

\subsubsection{Inferring \txp{$\hat{\bp}_{U}^{(d)}$}{p\_Ud} from \txp{$\hat{\bgamma}_{d}$}{gamma\_d}}

We first focus on the direct path, and discuss how to obtain the refined channel parameters associated with the direct path. Define ${\boldf} = [{g}_{U_d},{v}_{U_d},{g}_{B_d},{v}_{B_d}]\T$,
and $\bz = [f_{B,d},{g}_{B_d},{v}_{B_d}]\T$ with $f_{B,d}=\sin\theta_{B_d} \cos \phi_{B_d}$.
From \cref{AoD:B_d,AoD:U_d}, we have
$
{\boldf}=
[
      \bM_R;
      \boldsymbol{0}, ~\bI
] \bz= \tilde{\bA}\bz.
$
However, the estimation result
 $\hat{\boldf}  = [\hat{g}_{U_d},\hat{v}_{U_d},\hat{g}_{B_d},\hat{v}_{B_d}]\T$ may not satisfy the above relation due to corruption of noise.
We employ the weighted least squares method by solving
\begin{align} \label{eq:direct Pu}
\min_{\bz} (\hat{\boldf} - \tilde{\bA} \bz)\T \widebar{\bC}_{\hat{\boldf}}^{-1}  (\hat{\boldf} - \tilde{\bA} \bz),\ \text{subject to} ~\|  \bz\|_2^2 = 1,
\end{align}
where
$\widebar{\bC}_{\hat{\boldf}} =[\bF_{\hat{\bmeta}}^{-1}]_{2:5,2:5}$.
If we ignore the constraint, the solution is given by
$\hat{\bz} =(\tilde{\bA} \T \widebar{\bC}_{\hat{\boldf}}^{-1} \tilde{\bA})^{-1} \tilde{\bA} \T \widebar{\bC}_{\hat{\boldf}}^{-1} \hat{\boldf}$.
Then we project this solution to the feasible region of the problem in \cref{eq:direct Pu}.
The estimation of $(\hat{d},\hat{\theta}_{B_d},\hat{\phi}_{B_d})$ is given by
\begin{align}
\hat{d} = c\hat{\tau}_d,
\hat{\theta}_{B,d} = \arccos \hat{v}_{B,d},
\hat{\phi}_{B,d} = \arcsin ({\hat{g}_{B,d}}/{\sin \hat{\theta}_{B,d}}). \nonumber
\end{align}
The estimated UE position $\hat{\bp}_{U}^{(d)}$ from the direct path is then given by
\begin{align}
\begin{cases}
\hat{z}_{U_d}= \hat{d}   \cos \hat{\theta}_{B_d} =  \hat{d}   \hat{v}_{B,d} \nonumber\\
\hat{x}_{U_d}= \hat{d} \sin \hat{\theta}_{B_d}  \cos \hat{\phi}_{B_d}   = \hat{d} \sqrt{1-\hat{g}_{B,d}^2 - \hat{v}_{B,d}^2} \nonumber\\
\hat{y}_{U_d} = \hat{d} \sin \hat{\theta}_{B_d}  \sin \hat{\phi}_{B_d} =  \hat{d} \hat{g}_{B,d}.
\end{cases}
\end{align}

\subsubsection{Inferring \txp{$\hat{\bp}_{U}^{(r,q)}$}{p\_Uq} from \txp{$\hat{\bgamma}_{r,q}$}{gamma\_r}}\label{sect: cal pUr}
For the $q$th reflection path, we define $g_{R_2,q}  = \sin \theta_{R_2,q} \sin \phi_{R_2,q}$ and  ${f}_{U_r,q} = -\sqrt{1-{g}_{U_r,q}^2-{v}_{U_r,q}^2}$.
Based on the relations in \cref{eq:expression barMR} and \cref{AoA:U_r_q}, we have
\begin{align}
\begin{bmatrix}
f_{R_2,q}\\
g_{R_2,q}\\
v_{R_2,q}
\end{bmatrix} =
\begin{bmatrix}
\sin\theta_{R_2,q} \cos \phi_{R_2,q}\\
\sin\theta_{R_2,q} \sin \phi_{R_2,q}\\
\cos \theta_{R_2,q}
\end{bmatrix}
=
\widebar{\bM}_R^{-1}\begin{bmatrix}
f_{U_r,q}\\
g_{U_r,q}\\
v_{U_r,q}
\end{bmatrix}.
\end{align}
Therefore,
$[\hat{f}_{R_2,q},\hat{g}_{R_2,q},\hat{v}_{R_2,q}]\T
=\widebar{\bM}_R^{-1}[\hat{f}_{U_r,q},\hat{g}_{U_r,q},\hat{v}_{U_r,q}]\T$.
The estimation of $\{\hat{d}_{2,q},\hat{\theta}_{R_2,q} ,\hat{\phi}_{R_2,q} \}_{q=1}^Q$ is given by
\begin{align*}
\hat{d}_{2,q} \!\!= \! c\hat{\tau}_{r_2,q},\hat{\theta}_{R_2,q}\! \!=\! \arccos \hat{v}_{R_2,q},
\hat{\phi}_{R_2,q}  \!\!= \!\arctan 2(\hat{g}_{R_2,q},\! \hat{f}_{R_2,q}). \nonumber \nonumber
\end{align*}
Then, the UE position can be estimated as
\begin{align}
\begin{cases}
\hat{z}_{U_r,q} - {z}_{R,q}= \hat{d}_{2,q} \cos \hat{\theta}_{R_2,q}  = \hat{d}_{2,q}  \hat{v}_{R_2,q},\nonumber\\
\hat{x}_{U_r,q}-{x}_{R,q}= \hat{d}_{2,q} \sin \hat{\theta}_{R_2,q}  \cos \hat{\phi}_{R_2,q}   =  \hat{d}_{2,q} \hat{f}_{R_2,q}, \nonumber\\
\hat{y}_{U_r,q}-{y}_{R,q} = \hat{d}_{2,q}\sin \hat{\theta}_{R_2,q}  \sin \hat{\phi}_{R_2,q}= \hat{d}_{2,q} \hat{g}_{R_2,q}.   \nonumber
\end{cases}
\end{align}

\section{Discussions} \label{sec:discussions}
In this section, we propose methods to optimize the phase shifts of a RIS for the purpose of positioning a UE. We also discuss the extension of our proposed framework to the multi-BS and multi-UE scenarios.

\subsection{Design of RIS Phase Shifts}

\begin{figure}[!htpb]
\center
\includegraphics[width=0.6\textwidth]{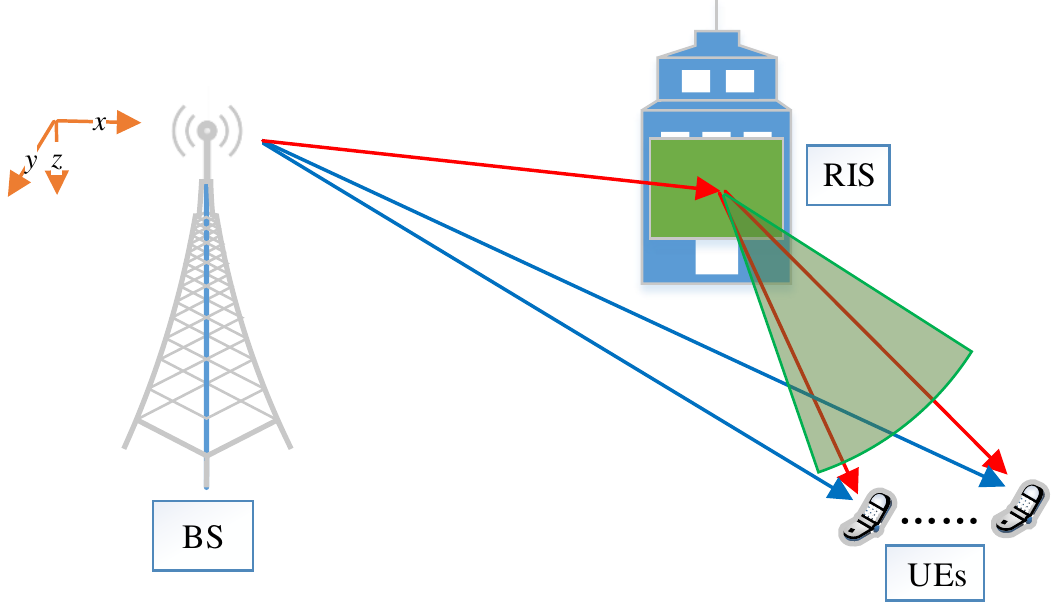}
\caption{Positioning of multiple UEs with the aid of a RIS. } \label{figure frame 1}
\end{figure}

We consider the RIS-aided positioning scenario in \cref{figure frame 1}, where the RIS aims to serve mutiple UEs. Specifically,
the phase shifts of a RIS are designed to serve the UEs with elevation angles in the range $[\theta_l, \theta_u]$ and azimuth angles in $[\phi_l, \phi_u]$. For example, the LOS between the BS and UEs within this region of interest may be blocked with high probability. The phase shifts of the RIS are designed to aid these UEs.

Recall that the gain of the reflection path is proportional to $| \ba_R^H(f_{R_2,q},v_{R_2,q}) \bTheta_q  \ba_R(f_{R_1,q},v_{R_1,q})| $.
Since the quantities $f_{R_1,q}$ and $v_{R_1,q}$ are unknown a priori, we make an unbiased design $\bTheta_q$ based on the served UEs in the following.

For convenience, we combine $ \bTheta_q  \ba_R(f_{R_1,q},v_{R_1,q})$ as one variable $ \tilde{\btheta}_q \in \C^{M \times 1}$. If the UEs are uniformly distributed  in elevation range $[\theta_l, \theta_u]$ and azimuth range $[\phi_l, \phi_u]$, then we consider the following optimization problem:
\begin{align}
&\max_{ \tilde{\btheta}_q } \Eb\brk*{ | (\ba_{\tilde{R}}(f) \otimes  \ba_{\tilde{R}}(v) )^H \tilde{\btheta}_q  |^2} \nonumber \\
&\text{subject to~} f = \sin\theta \cos \phi, ~v = \cos\theta, \nonumber \\
 &~~~~~~~~~~~~ \theta \sim U [{\theta}_{q,l},{\theta}_{q,u}    ], ~\phi \sim U [{\phi}_{q,l},{\phi}_{q,u}    ]. \label{eq: phase expectation}
\end{align}
Since directly solving \cref{eq: phase expectation} is challenging, we define a matrix
 $\bD_A \in \C^{M\times Z}$ with its column having the form of $ \ba_{\tilde{R}}(f) \otimes  \ba_{\tilde{R}}(v) $, where  $(f,v)$ is chosen in a discretized range. Therefore, we have the following approximation
\begin{align}
\Eb\brk*{ | (\ba_{\tilde{R}}(f) \otimes  \ba_{\tilde{R}}(v) )^H \tilde{\btheta}_q  |^2} \approx \frac{1}{Z} \| \bD_A^H \tilde{\btheta}_{q}   \|_2^2.
\end{align}
 Then, we can reformulate the  problem in \cref{eq: phase expectation} as
\begin{align}
\hat{\tilde{\btheta}}_q = \argmax_{\tilde{\btheta}_{q} }\| \bD_A^H \tilde{\btheta}_{q} \|_2^2. \label{eq: reform RIS phase}
\end{align}
If there is no constraint for $\hat{\tilde{\btheta}}_q $, the solution is the dominant left singular vector of $\bD_A$.
In order to satisfy the constraint imposed on $\tilde{\btheta}_{q}$, we let $\hat{\tilde{\btheta}}_q $ be the complex angle of the dominant left singular vector of $\bA$.
The design of phase shifts of $q$th RIS $\bTheta_q$ is then given by  $\diag(\btheta_q)$ with
\begin{align}
\angle\btheta_q = -\angle\ba_R(f_{R_1,q},v_{R_1,q}) + \angle \hat{\tilde{\btheta}}_q.
\end{align}


\subsection{Extension to Multiple UEs} \label{section multi UEs}
Assume there are $L$ users. Let the true position of UE $l$ be $\bp_{U,l} = \bp_{U,1}+\Delta_l$ with $\Delta_l$ being the relative position \gls{wrt} the reference UE 1 with position $\bp_{U,1}$. We assume that the relative positions of the UEs are known through inter-UE measurements and message exchanges \cite{wymeersch2009cooperative,xu2015distributed,conti2012network}.
Assume each UE $l$ first estimates its position independently as $\hat{\bp}_{U,l}$ with error covariance ${\bC}_{\bp_{U,l}}$.
To leverage on inherent correlations among the UEs,  we employ the linear combination of estimates as described in \cref{sec:linear fusion},
\begin{align}
\hat{\bp}_{U,1}&=\sum_{l=1}^{L} \bA_{l} (\hat{\bp}_{U,l}-\Delta_l),
\hat{\bp}_{U,l} = \hat{\bp}_{U,1}+\Delta_l, \label{eq:muti UE problem}
\end{align}
where $\bA_{l}\in \C^{3 \times 3}$ is the combining matrix.
By using the similar statements as \cref{lem:blue}, we can
minimize the MSE of the estimates, i.e., $\sum_{l=1}^{L}\tr(\Eb[ (\hat{\bp}_{U,l}- {\bp}_{U,l})(\hat{\bp}_{U,l}-{\bp}_{U,l})^H])$, and obtain the expression of $\bA_l$ in \cref{eq:muti UE problem} as follows,
\begin{align*}
\bA_l&=\left(\sum_{l=1}^{L} {\bC}_{\bp_{U,l}}^{-1}\right)^{-1} {\bC}_{\bp_{U,l}}^{-1}.
\end{align*}

Therefore, the estimated positions of UEs are given by
\begin{align}
\hat{\bp}_{U,1}&=\left(\sum_{l=1}^{L} {\bC}_{\bp_{U,l}}^{-1}\right)^{-1} \sum_{l=1}^{L}{\bC}_{\bp_{U,l}}^{-1}(\hat{\bp}_{U,l}-\Delta_l),\label{eq:muti UE solution}\\
\hat{\bp}_{U,l} &= \hat{\bp}_{U,1}+\Delta_l. \nonumber
\end{align}
After the fusion, the resulting error covariance of each UE is given by $(\sum_{l=1}^{L} {\bC}_{\bp_{U,l}}^{-1})^{-1}, \forall l$.
This result can be utilized for the case where there are multiple BSs, which we discuss in the following subsection.
In particular, when the error covariance ${\bC}_{\bp_{U,l}}$ in \cref{eq:muti UE solution} is not available, we can employ the lower bound as an alternative, which can still achieve near optimal performance as we analyzed in \cref{sec:linear fusion} and \cref{sect:asy MLE}.

\subsection{Extension to Multiple BSs} \label{sect:multi BSs}
We now consider the case where there are $P$ BSs. The received signal of the $k$th subcarrier at the UE is given by
\begin{align}
\bR_k =\sum_{i=1}^{P} \bH_{i,k} \bX_{i} + \bN_k, \label{eq:compact signal muti BS}
\end{align}
where $\bX_i \in \C^{D \times T}$.
As in the case of a single BS, we assume $\bX_i \bX_i^H ={T}/{D} \bI$. Here, we further assume that $\bX_i \bX_j^H  = \boldsymbol{0},\ \forall i\neq j$.
Thus, right multiplying both sides of \cref{eq:compact signal muti BS} with ${D}/{T}\bX_i^H $ yields
\begin{align}
\frac{D}{T}\bR_k \bX_{i}^H &=\left(\sum_{i=1}^{P} \frac{D}{T}\bH_{i,k} \bX_{i}\right)\bX_{i}^H + \frac{D}{T}\bN_k\bX_{i}^H \nonumber \\
 &= \bH_{i,k} + \frac{D}{T} \bN_k\bX_{i}^H.
\end{align}
From the signal transmitted by the $i$th BS, we can estimate the UE position by using the proposed method for a single BS, i.e., $\hat{\bp}_{U}^{(i)}$. Let the error covariance be ${\bC}_{{\bp}_{U}^{(i)}}$.
Then, the estimate from different BSs can be fused by linear combination as
\begin{align} \label{eq:fusion muti BS problem}
\hat{\bp}_{U} =\sum_{i=1}^{P}\bB_i\hat{\bp}_{U}^{(i)},
\end{align}
where $\bB_i \in \C^{3 \times 3}$ is the combining matrix.
Similarly, by minimizing the MSE of $\hat{\bp}_{U}$ in \cref{eq:fusion muti BS problem}, i.e., $\tr(\Eb[ \hat{\bp}_{U}- {\bp}_{U})(\hat{\bp}_{U}-{\bp}_{U})^H])$,
the final estimate from different BSs is given by
\begin{align*}
\hat{\bp}_{U} = \left( \sum_{i=1}^{P}{\bC}_{{\bp}_{U}^{(i)}} ^{-1}\right)^{-1}\sum_{i=1}^{P}{\bC}_{{\bp}_{U}^{(i)}}^{-1}\hat{\bp}_{U}^{(i)}.
\end{align*}

\section{Numerical Results}\label{sec:numerical}
In this section, we evaluate the proposed RIS-aided positioning method. We verify the UE positioning accuracy achieved by comparing to the CRB under varying the noise level. We also verify the channel parameter estimation accuracy. Numerical experiments are also conducted to provide insights into the impact of the direct path loss exponent on the UE positioning accuracy. Finally, we present the simulation results for the multi-UE and multi-BS scenarios.

In the simulations, we utilize the root mean-square error, RMSE $=\sqrt{\Eb[\|\bp_U-\hat{\bp}_U\|_2^2]}$, to measure the positioning accuracy, where $\hat{\bp}_U$ is the estimated UE position.
Throughout our experiments, we use the parameter settings in \cref{tab:simulation parameter}.

\begin{table}
  \footnotesize
  \centering
  \caption{Simulation Parameters}
  \begin{tabular}{|c|c|}\hline
    \textbf{Parameter}         & \textbf{Value}      \\ \hline
    Number of BS antennas      & $N=100$             \\ \hline
    Number of UE antennas      & $D=64$              \\ \hline
    RIS size                   & $M=400$             \\ \hline
    Transmission bandwidth     & $W = 100\text{MHz}$ \\ \hline
    Carrier frequency          & $f_c=30\text{GHz}$  \\ \hline
    Number of OFDM subcarriers & $K=32$              \\ \hline
    Rician factor              & $K_d=100$           \\ \hline
    Number of time slots       & $T=6\times 10^5$    \\  \hline
  \end{tabular}
  \label{tab:simulation parameter}
\end{table}

\subsection{UE Positioning Accuracy}

\begin{figure}[!htb]
\centering
\includegraphics[width=.6\textwidth]{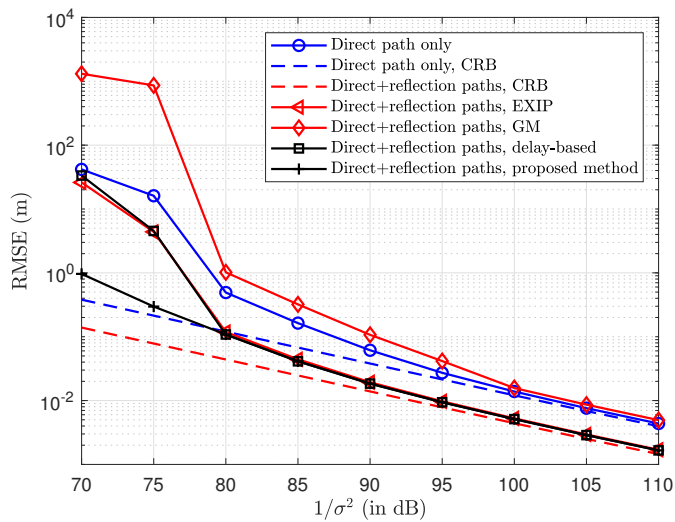}
\caption{UE position estimation RMSE versus the noise level.}  \label{fig:fig position error}
\end{figure}

In this simulation, we evaluate the UE positioning accuracy of the proposed RIS-aided positioning method with a single BS and UE. All position coordinates are measured in meters with the BS at the origin as illustrated in \cref{fig:system}.
The UE position is $\bp_U=[50,10,20]$, where $z_U=20$ means the UE height of  $20$m from the BS. One RIS is at position $\bp_{R,1}=[30,-5,2]$, where $z_{R,1}=2$ means the RIS height of $2$m from the BS.
The path loss exponent for the direct path is $L_d=4.5$, and the path loss exponent for reflection path is $L_r=2$.
We evaluate the positioning accuracy following methods:
\begin{itemize}
  \item The proposed positioning method that distinguishes the direct path based on the path energy, labeled as ``Direct+reflection paths, proposed method''.
  \item The proposed positioning method that distinguishes the direct path based on the estimated delay, labeled as ``Direct+reflection paths, delay-based''.
  \item The positioning method with EXIP \cite{stoica1989reparametrization}, labeled as ``Direct+reflection paths, EXIP''.
  \item The positioning method with geometric mapping \cite{Lin2021Channel,Jing2020Positioning}, labeled as ``Direct+reflection paths, GM''.
  \item The positioning method that utilizes only the direct path, labeled as ``Direct path only''.
\end{itemize}
We observe from \cref{fig:fig position error} that our proposed method outperforms the benchmark approaches, with RMSE close to the CRB when the SNR is high. The result also verifies that distinguishing the paths based on the path energy provides better performance compared to the delay-based approach.
We also observe that fusing the estimates from the direct and reflection paths achieves a better accuracy than using only the direct path, which validates the effectiveness of the RIS.

\subsection{Channel Parameter Estimation Accuracy}

\begin{figure}[!tb]
\centering
\includegraphics[width=.6\textwidth]{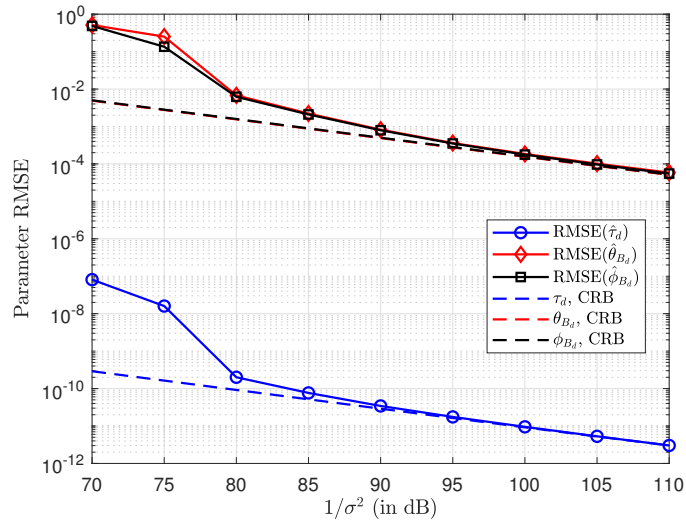}
\caption{Channel parameter estimation RMSE of the direct path versus the noise level.}  \label{fig:direct param}
\end{figure}
\begin{figure}[!tb]
\centering
\includegraphics[width=.6\textwidth]{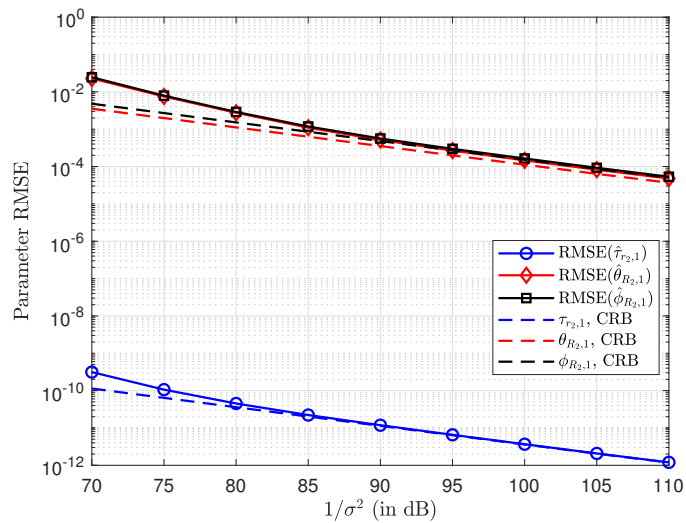}
\centering
\caption{Channel parameter estimation RMSE of the reflection path versus the noise level.}  \label{fig:ref param}
\end{figure}

In \cref{fig:direct param,fig:ref param}, we evaluate the RMSE of the channel parameters, i.e., $(\tau_d, \theta_{B_d},\phi_{B_d})$ and $(\tau_{r_2,1}, \theta_{R_2,1},\phi_{R_2,1})$, by using proposed method.
The CRBs of the estimators are also plotted as the benchmark.
The simulation settings are the same as those in \cref{fig:fig position error}. As we expect in \cref{fig:direct param,fig:ref param}, the RMSEs of the estimated parameters are all close to their CRBs, which validates the effectiveness of the proposed method.

\subsection{Direct Path Loss Exponent}

\begin{figure}[!htb]
\centering
\includegraphics[width=.6\textwidth]{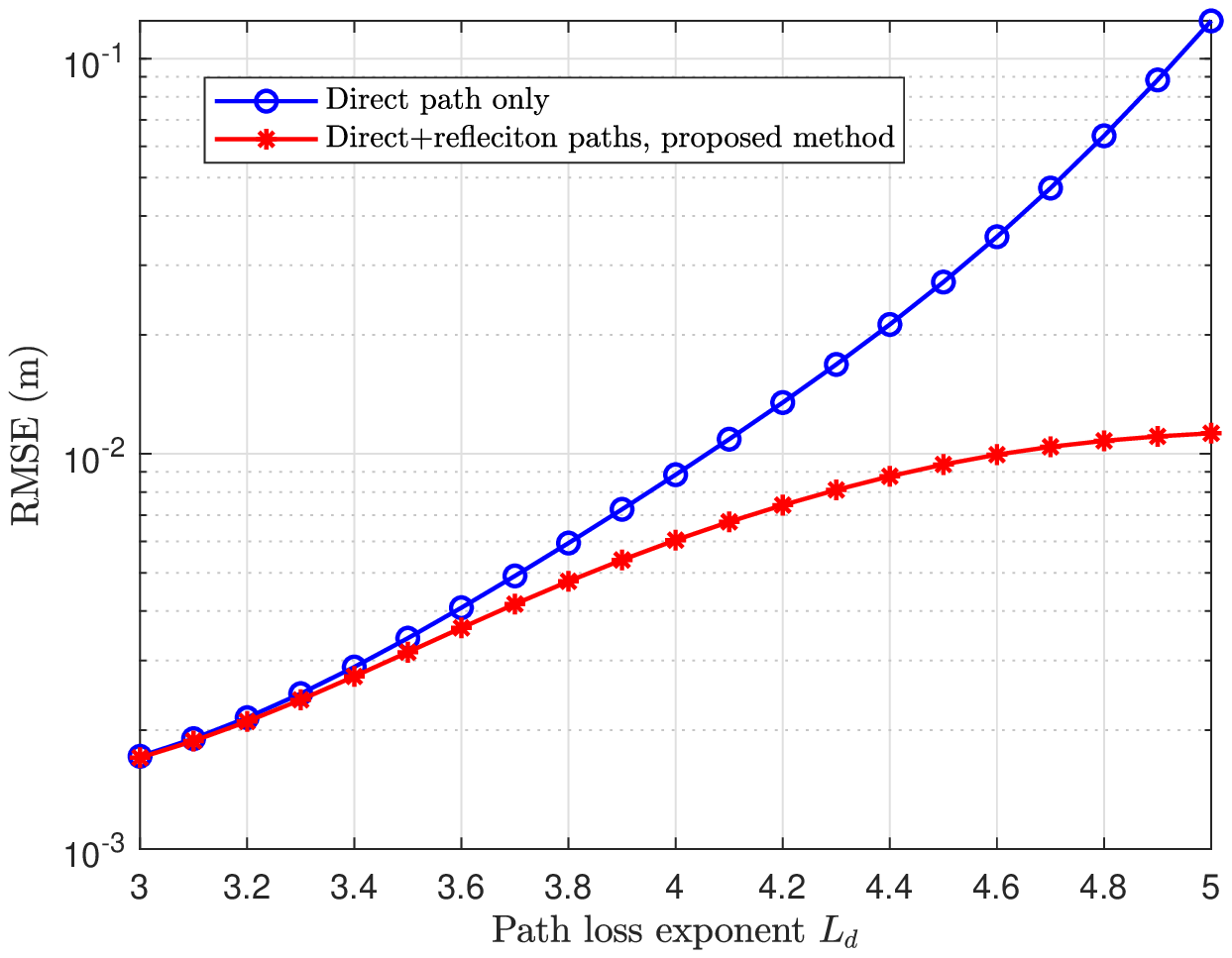}
\caption{The RMSE of UE position versus different path loss exponent of direct path.}  \label{fig:path loss exponent}
\end{figure}

In \cref{fig:path loss exponent}, we evaluate the positioning accuracy of the proposed method under different path loss exponents for the direct path. The simulation settings are the same as those in \cref{fig:fig position error} except that $L_d = \{3,3.1,\ldots,5\}$ and $1/\sigma^2=95$dB. The path loss exponent captures the blockage severity of the direct path with larger path loss exponent meaning more severe blocking. We observe from \cref{fig:path loss exponent} that when the direct path is not severely blocked ($L_d = 3$), the fusion result has similar performance as the ``direct path only'' case. As the path loss exponent increases, the positioning error using only the direct path increases, while the proposed method still produces an accurate result. This verifies the proposed positioning method can adapt to different fading scenarios of the direct path.

\subsection{Multi-UE and Multi-BS Scenarios}

\begin{figure}[!htb]
\centering
\includegraphics[width=.6\textwidth]{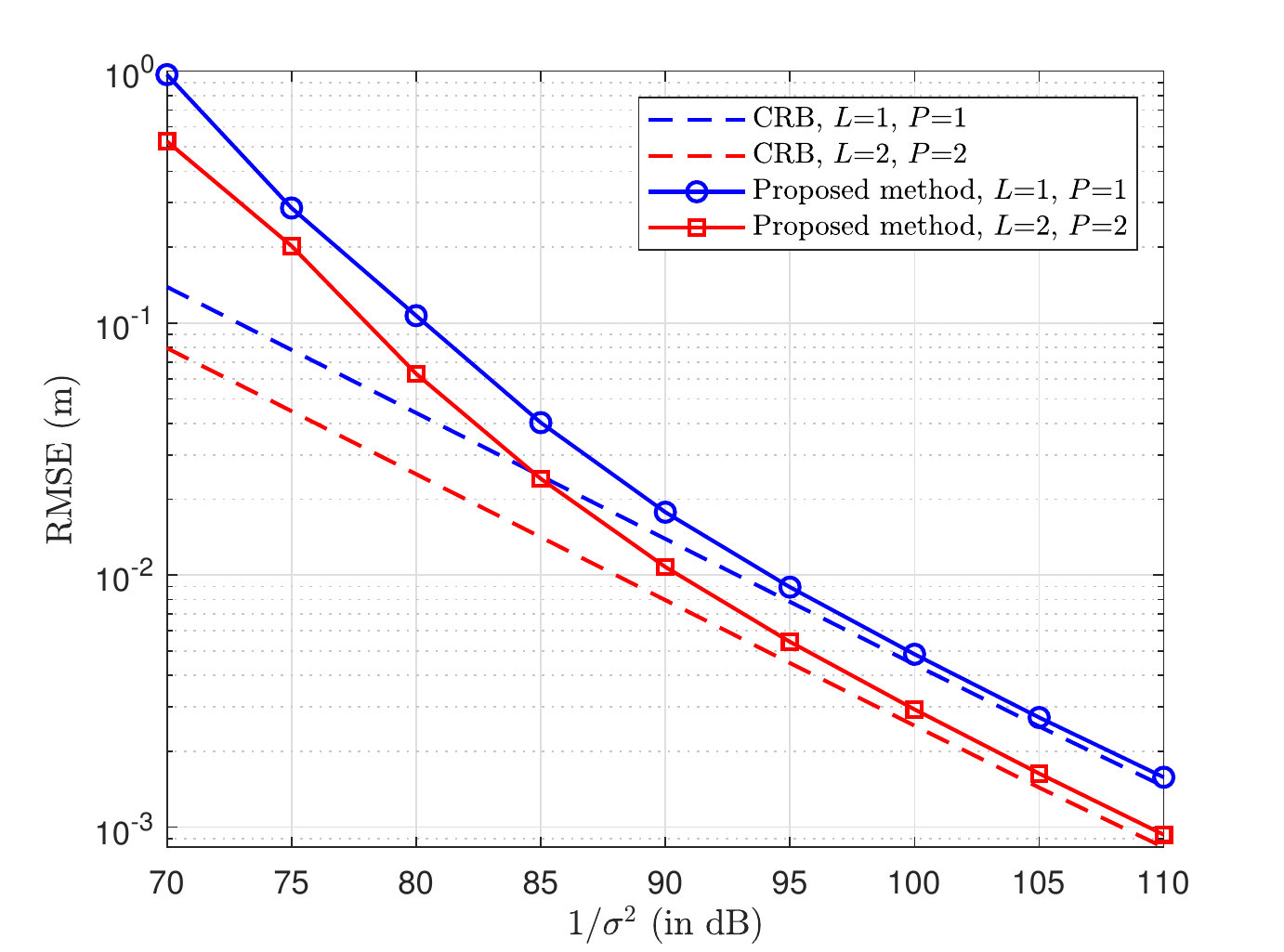}
\caption{The RMSE of UE position versus the noise level with multiple BSs and multiple UEs.} \label{fig:multi_BS}
\end{figure}

In \cref{fig:multi_BS}, we s compare the positioning accuracy achieved by using multiple BSs and multiple UEs with the scenario of single BS and single UE.
The positions of the two BSs are at $[0,0,0]$ and $[0,10,0]$. The positions of the two UEs are at $[50,10,20]$ and $[52,10,20]$. The position of the RIS is at $[30,-5,2]$.
The path loss exponents are $L_d=4.5$ and $L_r=2$.
In \cref{fig:multi_BS},  by using the techniques in \cref{sec:discussions}, the proposed positioning method in this scenario also achieves performance close to the theoretical bound. From \cref{fig:multi_BS}, when more than one BS and UE can cooperate and exchange information, the positioning accuracy can be further improved.

\section{Conclusions}\label{sec:conclusions}
In this paper, we have developed a RIS-aided positioning framework. The framework consists of first estimating the RIS-aided channel parameters from received signals, and then using these estimates to infer the UE position. Through an optimal linear combination of estimates from the direct and reflection paths, the proposed fusion method is shown via the EXIP framework to approximate the \gls{MLE} asymptotically when the estimates are independent and the number of samples is large. The advantage of our approach is computational tractability, making it amendable to real-time implementation, as compared to direct estimation of the UE position from the received signals.
Moreover, the proposed RIS-aided positioning method can be readily extended to the multi-BS and multi-user scenario.
Through simulation studies, we demonstrated
the positioning accuracy of the proposed method, which shows that it is close to the CRB and can adapt to different channel fading scenarios.

\appendices
\section{CRB Derivation}\label[Appendix]{app:CRB}
Taking the derivatives of $\widebar{\bH}_k$ \gls{wrt} the parameters of the direct path, we have
\begin{small}
\begin{align}
\frac{\partial \widebar{\bH}_k}{\partial \tau_d}&= -\iu2\pi\frac{kW}{K} \widebar{\bH}_{d,k}, \nonumber \\
\frac{\partial \widebar{\bH}_k}{\partial\opRe\{ h_d \}}&= \widebar{\bH}_{d,k}/ { h_d},~\frac{\partial \bH_k}{\partial \opIm\{h_{d}\}} = \iu \widebar{\bH}_{d,k}/ { h_d} , \nonumber \\
\frac{\partial \widebar{\bH}_k}{\partial g_{U_d}} &=  h_{d,k}\frac{\partial \ba_U(g_{U_d}, v_{U_d})}{\partial g_{U_d}}\ba_B^H(g_{B_d},v_{B_d}) ,  \nonumber \\
\frac{\partial \widebar{\bH}_k}{\partial v_{U_d}} &= h_{d,k} \frac{\partial \ba_U(g_{U_d}, v_{U_d})}{\partial v_{U_d}}\ba_B^H(g_{B_d},v_{B_d}) ,\nonumber \\
\frac{\partial \widebar{\bH}_k}{\partial g_{B_d}}&=  h_{d,k} \ba_U(g_{U_d}, v_{U_d})\frac{\partial \ba_B^H(g_{B_d},v_{B_d}) }{\partial g_{B_d}} , \nonumber \\
\frac{\partial \widebar{\bH}_k}{\partial v_{B_d}}&=h_{d,k},
\ba_U(g_{U_d}, v_{U_d})\frac{\partial \ba_B^H(g_{B_d},v_{B_d}) }{\partial v_{B_d}}, \nonumber
\end{align}
\end{small}%
and the derivative \gls{wrt} the parameters of the reflection path, we obtain
\begin{small}
\begin{align}
\frac{\partial \widebar{\bH}_{k}}{\partial \tau_{r_2,q}}&= -\iu2\pi\frac{kW}{K} \bH_{r,k,q}, \nonumber \\
\frac{\partial \widebar{\bH}_{k}}{\partial \opRe\{h_{r,q}\}}&= \bH_{r,k,q}/ { h_{r,q}},~\frac{\partial \widebar{\bH}_{k}}{\partial\opIm\{h_{r,q}\}} = \iu \bH_{r,k,q}/ { h_{r,q}} , \nonumber \\
\frac{\partial \widebar{\bH}_{k}}{\partial g_{U_r,q}} &=  h_{k} \frac{\partial \ba_U(g_{U_r,q}, v_{U,r })}{\partial g_{U_r,q}}\ba_B^H(g_{B_r,q},v_{B_r,q}),\nonumber \\
\frac{\partial \widebar{\bH}_{k}}{\partial v_{U_r,q}}&= h_{k}  \frac{\partial \ba_U(g_{U_r,q}, v_{U,r })}{\partial v_{U_r,q}}\ba_B^H(g_{B_r,q},v_{B_r,q}).\nonumber
\end{align}
\end{small}

We also have
\begin{small}
\begin{align}
\frac{\partial \ba_U(g_{U}, v_{U})}{\partial g_U} &=\left(\iu\pi \ba_{\tilde{U}}(g_{U}) \circ \bE_U\right) \otimes   \ba_{\tilde{U}}(v_{U}),\nonumber \\
\frac{\partial \ba_U(g_{U}, v_{U})}{\partial v_U}&=\ba_{\tilde{U}}(g_{U}) \otimes \left(\iu\pi \ba_{\tilde{U}}(v_{U}) \circ \bE_U\right),    \nonumber\\
\frac{\partial \ba_B(g_{B}, v_{B})}{\partial g_B} &=\left(\iu\pi \ba_{\tilde{B}}(g_{B}) \circ \bE_B\right) \otimes   \ba_{\tilde{B}}(v_{B}) ,\nonumber \\
\frac{\partial \ba_U(g_{B}, v_{B})}{\partial v_B}&=\ba_{\tilde{B}}(g_{B}) \otimes \left(\iu\pi \ba_{\tilde{B}}(v_{B}) \circ \bE_B\right),\nonumber\\
\frac{\partial \ba_R(f_{R}, v_{R})}{\partial f_R} &=\left(\iu\pi \ba_{\tilde{R}}(f_{R}) \circ \bE_R\right) \otimes   \ba_{\tilde{R}}(v_{R}), \nonumber \\
\frac{\partial \ba_R(f_{R}, v_{R})}{\partial v_R}&= \ba_{\tilde{R}}(f_{R}) \otimes \left(\iu\pi \ba_{\tilde{R}}(v_{R}) \circ \bE_R\right) ,\nonumber
\end{align}
\end{small}%
where $\bE_U=[0,1,\ldots,D^{1/4}]\T$, $\bE_B=[0,1,\ldots,N^{1/4}]\T$, and $\bE_R=[0,1,\ldots,M^{1/4}]\T$.

\section{Jacobian Matrix Derivation}\label[Appendix]{app:jaco}
In this appendix, we derive the Jacobian matrix $\bJ \in \R^{(7+5Q) \times (5+2Q)}$ used in \cref{eq:Ibxi}.
Recall that $\boldsymbol{\eta}$ in \cref{eq:bmeta} and $\bxi$ in \cref{eq:bxi}, we write $\bJ$ in the following form:
\begin{align*}
  \bJ&=\begin{bmatrix}
        \tilde{\bJ}_d\T &
        \tilde{\bJ}_{r,1}\T&
         \cdots &
       \tilde{\bJ}_{r,Q}\T
       \end{bmatrix}\T,
\end{align*}
where $\tilde{\bJ}_d=\frac{\partial \bmeta_d}{\partial\bxi\T}\in \R^{7\times (5+2Q)}$ and $\tilde{\bJ}_{r,q}=\frac{\partial \bmeta_r}{\partial\bxi\T}\in \R^{5\times (5+2Q)}$.
We first derive the Jacobian matrix of the direct path $\tilde{\bJ}_d$, whose entries are can be obtained through
\begin{small}
\begin{align}
\frac{\partial \tau_d}{\partial \bp_U\T}&=\frac{\bp_U\T}{c \| \bp_U \|_2},\ \frac{\partial\opRe\{ h_d \}}{\partial\opRe\{ h_d \}}=1,\ \frac{\partial \opIm\{h_{d}\}}{\partial \opIm\{h_{d}\}}=1\blue{,}\nonumber \\
\frac{\partial s_d}{\partial \bp_U\T}&=\frac{\partial s_d}{\partial \theta_{B_d}}\frac{\theta_{B_d}}{\partial \bp_U\T} + \frac{\partial s_d}{\phi_{B_d}}\frac{\partial \phi_{B_d}}{\partial \bp_U\T}. \nonumber
\end{align}
\end{small}%
with $s_d$ denoting any entry in $\bmeta_d$.
We have
\begin{align*}
\frac{\partial \theta_{B_d}}{\partial \bp_U\T}&=\frac{1}{ \| \bp_U \|_2^3 {\left(1-\frac{(z_U)^2}{\| \bp_U \|_2^2}\right)^{\frac{1}{2}}}}\left[x_Uz_U, y_Uz_U, -x_U^2-y_U^2\right], \\
\frac{\partial \phi_{B_d}}{\partial  \bp_U\T}&= \left[ -\frac{y_U}{x_U^2+y_U^2},\frac{x_U}{x_U^2+y_U^2} ,0 \right], \\
\frac{\partial g_{B_d}}{\partial \theta_{B_d}}&= \cos \theta_{B_d} \sin\phi_{B_d},
~\frac{\partial g_{B_d}}{\partial \phi_{B_d}}=  \sin \theta_{B_d} \cos\phi_{B_d} \\
\frac{\partial v_{B_d}}{\partial \theta_{B_d}}&= -\sin\theta_{B_d},~~\frac{\partial v_{B_d}}{\partial \phi_{B_d}}= 0,\\
\frac{\partial g_{U_d}}{\partial \theta_{B_d}}&=  [\bM_R]_{1,1}  \cos \theta _{B_d}   \cos \phi _{B_d} + [\bM_R]_{1,2}  \cos \theta _{B_d}   \sin \phi _{B_d} -   [\bM_R]_{1,3}  \sin \theta _{B_d}, \\
\frac{\partial g_{U_d}}{\partial \phi_{B_d}}&= -[\bM_R]_{1,1} \sin \phi_{B_d}\sin \theta_{B_d}
 +  [\bM_R]_{1,2} \cos \phi_{B_d}\sin \theta_{B_d},\\
\frac{\partial v_{U_d}}{\partial \theta_{B_d}}&= [\bM_R]_{2,1}  \cos \theta _{B_d} \cos \phi _{B_d} + [\bM_R]_{2,2}  \cos \theta _{B_d}   \sin \phi _{B_d}-   [\bM_R]_{2,3}  \sin \theta _{B_d},\\
\frac{\partial v_{U_d}}{\partial \phi_{B_d}}&=  -[\bM_R]_{2,1} \sin \phi_{B_d}\sin \theta_{B_d}  +  [\bM_R]_{2,2} \cos \phi_{B_d} \sin \theta_{B_d}.
\end{align*}

We next derive the Jacobian matrix of the reflection path $\tilde{\bJ}_{r,q}$, whose entries are can be obtained through
\begin{align}
\frac{\partial \tau_{r_2,q}}{\partial \bp_U\T}&=\frac{\bp_U\T-\bp_{R,q}\T}{c \| \bp_U -\bp_{R,q} \|_2},
\frac{\partial \opRe\{h_{r,q}\}}{\partial \opRe\{h_{r,q}\}}=1,\frac{\partial\opIm\{h_{r,q}\}}{\partial\opIm\{h_{r,q}\}}=1
\nonumber \\
\frac{\partial s_r}{\partial \bp_U\T}&=\frac{\partial s_r}{\partial \theta_{R_2,q}}\frac{ \theta_{R_2,q}}{\partial \bp_U\T} + \frac{\partial s_r}{ \phi_{R_2,q}}\frac{\partial  \phi_{R_2,q}}{\partial \bp_U\T},\nonumber
\end{align}
with $s_r$ denoting any entry in $\bmeta_r$.
Here, we denote $\tilde{\bp}_{U,q} = \bp_U - \bp_{R,q}  = \left[ \tilde{x}_{U,q},\tilde{y}_{U,q},\tilde{z}_{U,q} \right]\T$.
Therefore,
\begin{align*}
\frac{\partial \theta_{R_2,q}}{\partial  \bp_U\T}&=\frac{ \left[{\tilde{x}_{U,q}\tilde{z}_{U,q}}, {\tilde{y}_{U,q}\tilde{z}_{U,q}},
-{\tilde{x}_{U,q}^2-\tilde{y}_{U,q}^2}\right]}{ \|\tilde{\bp}_{U,q} \|_2^3 {\left(1-\frac{\tilde{z}_{U,q}^2}{\| \tilde{\bp}_{U,q} \|_2^2}\right)^{\frac{1}{2}}}}, \\
\frac{\partial \phi_{R_2,q}}{\partial  \bp_U\T}&= \Big[ -\frac{\tilde{y}_{U,q}}{\tilde{x}_{U,q}^2+\tilde{y}_{U,q}^2}, \frac{\tilde{x}_{U,q}}{\tilde{x}_{U,q}^2+\tilde{y}_{U,q}^2} ,0 \Big],\\
\frac{\partial g_{U_r,q}}{\partial \theta_{R_2,q}}&=[\bM_R]_{1,1}  \cos \theta _{R_2,q}\cos \phi _{R_2,q} +     [\bM_R]_{1,2}  \cos \theta _{R_2,q}\sin \phi _{R_2,q} - [\bM_R]_{1,3}  \sin \theta _{R_2,q}, \\
\frac{\partial g_{U_r,q}}{\partial \phi_{R_2,q}}&=-[\bM_R]_{1,1} \sin\theta_{R_2,q} \sin \phi_{R_2,q}+   [\bM_R]_{1,2} \cos \phi_{R_2,q} \sin \theta_{R_2,q}, \\
\frac{\partial v_{U_r,q}}{\partial \theta_{R_2,q}}&=   [\bM_R]_{2,1}  \cos \theta _{R_2,q}\cos \phi _{R_2,q} +     [\bM_R]_{2,2}  \cos \theta _{R_2,q}\sin \phi _{R_2,q} -[\bM_R]_{2,3}  \sin \theta _{R_2,q}, \\
\frac{\partial v_{U_r,q}}{\partial \phi_{R_2,q}}&= -[\bM_R]_{2,1} \sin\theta_{R_2,q} \sin \phi_{R_2,q} +   [\bM_R]_{2,2} \cos \phi_{R_2,q} \sin \theta_{R_2,q}.
\end{align*}

\section{Proof of \cref{pro:crb bound}}\label[Appendix]{app:single position}

When we utilize only the BS-UE link for UE positioning, the FIM is given by
\begin{align}
\widebar{\bC}_{\bxi_d}=\left( \bJ_d\T  \widebar{\bC}_{{\bmeta}_d}^{-1} \bJ_d \right)^{-1},
\end{align}
where $\widebar{\bC}_{{\bmeta}_d} = \left[ \bF_{\bmeta} ^{-1}\right]_{1:7,1:7}$, and $ \bJ_d  = \frac{\partial \bmeta_d }{\partial \bxi_d\T}\in \R^{7 \times 5}$. Thus, the error covariance matrix satisfies the following:
\begin{align}
\bC_{{\bp}_{U}}^{(d)} &\succeq   \left[ \widebar{\bC}_{\bxi_d}\right]_{1:3,1:3}. \label{eq: crb direct p1}
\end{align}
Since
$
\bF_{\bmeta_d}^{-1} \succeq \left[ \bF_{\bmeta} ^{-1}\right]_{1:7,1:7}
$, the following equation holds,
\begin{align}
\widebar{\bC}_{\bxi_d}\!\!=\!\!\left( \bJ_d\T \left ( \left[ \bF_{\bmeta} ^{-1}\right]_{1:7,1:7}\right)^{-1} \bJ_d \right)^{-1}
\succeq
\left( \bJ_d\T \bF_{\bmeta_d}   \bJ_d \right)^{-1}. \label{eq: crb direct p2}
\end{align}
Therefore, combining \cref{eq: crb direct p1} and \cref{eq: crb direct p2}, we have
\begin{align*}
\bC_{{\bp}_{U}}^{(d)} &\succeq  \left[ \widebar{\bC}_{\bxi_d}\right]_{1:3,1:3} \succeq  \left[ \left(\bJ_d\T \bF_{\bmeta_d}  \bJ_d \right)^{-1} \right]_{1:3,1:3}.
\end{align*}
This concludes the proof in \cref{eq:crb direct}.

Similarly, when only the $q$th RIS link is utilized for UE positioning, we can also obtain
\begin{align*}
\bC_{{\bp}_{U}}^{(r,q)} &\succeq  \left[ \widebar{\bC}_{\bxi_{r,q}}\right]_{1:3,1:3} \succeq  \left[ \left(\bJ_{r,q}\T \bF_{\bmeta_{r,q}}  \bJ_{r,q} \right)^{-1} \right]_{1:3,1:3}.
\end{align*}
This concludes the proof for \cref{eq:crb reflect}.

\section{Proof of \cref{lem:equivalence}}\label[Appendix]{app:equivalence}
It suffices to show that
\begin{small}
\begin{align*}
  \left[( \bJ\T\bF_{\hat{\bmeta}}\bJ)^{-1}
    \bJ\T
    \bF_{\hat{\bmeta}}\right]_{1:3,:}
  \begin{bmatrix}
    \bJ_d \hat{\bxi}_d \\
    \boldsymbol{0}
  \end{bmatrix}
  = \tilde{\bC}_{\bp_U}  (\tilde{\bC}_{\bp_U}^{(d)})^{-1} \hat{\bp}_{U}^{(d)}.
\end{align*}
\end{small}
\!\!\!The components for reflection paths can be proved similarly.
Thus, it is sufficient to prove the following two equalities:
\begin{align}
\left[( \bJ\T\bF_{\hat{\bmeta}}\bJ)^{-1}\right]_{1:3,:}
\left[\bJ\T
\bF_{\hat{\bmeta}}\right]_{:,1:7}\left[\bJ_d \right]_{:,1:3}
&= \tilde{\bC}_{\bp_U} (\tilde{\bC}_{\bp_U}^{(d)})^{-1} \label{eq:equivalence1}\\
\left[( \bJ\T\bF_{\hat{\bmeta}}\bJ)^{-1}\right]_{1:3,:}
\left[\bJ\T
\bF_{\hat{\bmeta}}\right]_{:,1:7}\left[\bJ_d \right]_{:,4:5}
&=\boldsymbol{0}. \label{eq:equivalence2}
\end{align}

Here, we write
\begin{align}
\bF_{\hat{\bmeta}_d}\!\! =\!\!
\begin{bmatrix}
\bF_{\hat{\bmeta}_d}^{(h)} & \bF_{\hat{\bmeta}_d}^{(h,p)} \\
\bF_{\hat{\bmeta}_d}^{(p,h)}  & \bF_{\hat{\bmeta}_d}^{(p)}
\end{bmatrix}, ~~
\bF_{\hat{\bmeta}_{r,q}} \!\!=\!\!
\begin{bmatrix}
\bF_{\hat{\bmeta}_{r,q}}^{(h)} & \bF_{\hat{\bmeta}_{r,q}}^{(h,p)} \\
\bF_{\hat{\bmeta}_{r,q}}^{(p,h)}  & \bF_{\hat{\bmeta}_{r,q}}^{(p)}
\end{bmatrix},
\end{align}
where $\bF_{\hat{\bmeta}_d}^{(h)} \in \C^{2 \times 2}$, $\bF_{\hat{\bmeta}_d}^{(p)} \in \C^{5\times 5}$, $\bF_{\hat{\bmeta}_{r,q}}^{(h)} \in \C^{2 \times 2}$, $\bF_{\hat{\bmeta}_{r,q}}^{(p)} \in \C^{3\times 3}$, and the remaining matrices have matching dimensions.
Then, for the first term in the product on \gls{LHS} of \cref{eq:equivalence1}, we can calculate
\begin{align} \label{eq:J_I_J}
 \bJ\T\bF_{\hat{\bmeta}}\bJ =
 \begin{bmatrix}
    \bZ_{1,1}&  \bZ_{1,2}\\
\bZ_{2,1}& \bZ_{2,2}
 \end{bmatrix},
\end{align}
where we denote
\begin{small}
\begin{align*}
 \bZ_{1,1}&= (\bJ_d^{(p)})\T \bF_{\hat{\bmeta}_d}^{(p)}  \bJ_d^{(p)} + \sum_{q=1}^{Q}(\bJ_{r,q}^{(p)})\T \bF_{\hat{\bmeta}_{r,q}}^{(p)}  \bJ_{r,q}^{(p)},\\
  \bZ_{1,2}&=
  \begin{bmatrix}
(\bJ_d^{(p)})\T \bF_{\hat{\bmeta}_d}^{(p,h)}  &  (\bJ_{r,1}^{(p)})\T \bI_{\hat{\bmeta}_{r,1}}^{(p,h)} & \cdots & (\bJ_{r,Q}^{(p)})\T \bI_{\hat{\bmeta}_{r,Q}}^{(p,h)}
  \end{bmatrix},\\
  \bZ_{2,1}&= \bZ_{1,2}\T,\\
  \bZ_{2,2}& = \begin{bmatrix}
   \bF_{\hat{\bmeta}_d}^{(h)} & \boldsymbol{0} & \cdots&\boldsymbol{0}\\
  \boldsymbol{0} & \bI_{\hat{\bmeta}_{r,1}}^{(h)} & \cdots&\boldsymbol{0} \\
 \vdots & \vdots & \ddots&\vdots \\
 \boldsymbol{0} &\boldsymbol{0}   & \cdots &\bI_{\hat{\bmeta}_{r,Q}}^{(h)}
               \end{bmatrix}.
\end{align*}
\end{small}
\!\!\!From \cref{eq:J_I_J}, we have
\begin{small}
\begin{align}
\!\!\!\!\left[( \bJ\T\bF_{\hat{\bmeta}}\bJ)^{-1}\right]_{1:3,:}\!\! =\!\!
\left(\bZ_{1,1}- \bZ_{1,2}\bZ_{2,2}^{-1} \bZ_{2,1}\right)^{-1}
\!\!
\begin{bmatrix}
   \bI & -\bZ_{1,2} \bZ_{2,2}^{-1}
\end{bmatrix}. \label{eq:J_I_J_inv3}
\end{align}
\end{small}
\!\!From the definitions of $\tilde{\bC}_{\bp_U}^{(d)}$ in \cref{eq:def_Cd} and $\tilde{\bC}_{\bp_U}^{(r,q)}$ in \cref{eq:def_Cr}, we can check that
$
\left[( \bJ\T\bF_{\hat{\bmeta}}\bJ)^{-1}\right]_{1:3,1:3} =\tilde{\bC}_{\bp_U}
$.
Therefore, \cref{eq:J_I_J_inv3} can be rewritten as
\begin{align}
&\left[( \bJ\T\bF_{\hat{\bmeta}}\bJ)^{-1}\right]_{1:3,:} =
\tilde{\bC}_{\bp_U}
 \begin{bmatrix}
  \bI &
 -\bZ_{1,2} \bZ_{2,2}^{-1}
  \end{bmatrix}. \label{eq:first part}
\end{align}

For the second and third terms in the product on the \gls{LHS} of \cref{eq:equivalence1}, we have
\begin{small}
\begin{align}
\left[\bJ\T
\bF_{\hat{\bmeta}}\right]_{:,1:7} \bJ_d=
 \begin{bmatrix}
   (\bJ_d^{(p)})\T \bF_{\hat{\bmeta}_d}^{(p)} \bJ_d^{(p)}
    &  (\bJ_d^{(p)})\T \bF_{\hat{\bmeta}_d}^{(p,h)}    \\
  \bF_{\hat{\bmeta}_d}^{(h,p)}\bJ_d^{(p)}  &  \bF_{\hat{\bmeta}_d}^{(h)}   \\
    \boldsymbol{0}&\boldsymbol{0}  \\
   \vdots & \vdots\\
   \boldsymbol{0} & \boldsymbol{0}
 \end{bmatrix}.  \label{eq:second part}
\end{align}
\end{small}

Combining \cref{eq:first part} and \cref{eq:second part} gives
\begin{small}
\begin{align*}
\left[( \bJ\T\bF_{\hat{\bmeta}}\bJ)^{-1}\right]_{1:3,:}
\left[\bJ\T
\bF_{\hat{\bmeta}}\right]_{:,1:7}\bJ_d
&=\tilde{\bC}_{\bp_U}
[
  (\bJ_d^{(p)})\T \bF_{\hat{\bmeta}_d}^{(p)} \bJ_d^{(p)} \!\!-\!\!(\bJ_d^{(p)})\T \bF_{\hat{\bmeta}_d}^{(p,h)}(   \bF_{\hat{\bmeta}_d}^{(h)} )^{-1}    \bF_{\hat{\bmeta}_d}^{(h,p)}\bJ_d^{(p)}~~
\boldsymbol{0}
].
\end{align*}
\end{small}
\!\!We can verify that
\begin{align*}
  (\bJ_d^{(p)})\T \bF_{\hat{\bmeta}_d}^{(p)} \bJ_d^{(p)}\!\! -\!\!(\bJ_d^{(p)})\T \bF_{\hat{\bmeta}_d}^{(p,h)}(   \bF_{\hat{\bmeta}_d}^{(h)} )^{-1}    \bF_{\hat{\bmeta}_d}^{(h,p)}\bJ_d^{(p)}
=(\tilde{\bC}_{\bp_U}^{(d)})^{-1}.
\end{align*}
Thus, we have proved  \cref{eq:equivalence1} and \cref{eq:equivalence2}. This concludes the proof.

\bibliographystyle{IEEEtran}

\bibliography{IEEEabrv,Conference_mmWave_CS}

\begin{thebibliography}{10}
\providecommand{\url}[1]{#1}
\csname url@samestyle\endcsname
\providecommand{\newblock}{\relax}
\providecommand{\bibinfo}[2]{#2}
\providecommand{\BIBentrySTDinterwordspacing}{\spaceskip=0pt\relax}
\providecommand{\BIBentryALTinterwordstretchfactor}{4}
\providecommand{\BIBentryALTinterwordspacing}{\spaceskip=\fontdimen2\font plus
\BIBentryALTinterwordstretchfactor\fontdimen3\font minus
  \fontdimen4\font\relax}
\providecommand{\BIBforeignlanguage}[2]{{%
\expandafter\ifx\csname l@#1\endcsname\relax
\typeout{** WARNING: IEEEtran.bst: No hyphenation pattern has been}%
\typeout{** loaded for the language `#1'. Using the pattern for}%
\typeout{** the default language instead.}%
\else
\language=\csname l@#1\endcsname
\fi
#2}}
\providecommand{\BIBdecl}{\relax}
\BIBdecl

\bibitem{DongSur2020}
S.~{Gong}, X.~{Lu}, D.~T. {Hoang}, D.~{Niyato}, L.~{Shu}, D.~I. {Kim}, and
  Y.~C. {Liang}, ``Toward smart wireless communications via intelligent
  reflecting surfaces: A contemporary survey,'' \emph{IEEE Commun. Surv.
  Tutor.}, vol.~22, no.~4, pp. 2283--2314, 2020.

\bibitem{Basar2019WirelessAccess}
E.~Basar, M.~Di~Renzo, J.~De~Rosny, M.~Debbah, M.-S. Alouini, and R.~Zhang,
  ``Wireless communications through reconfigurable intelligent surfaces,''
  \emph{IEEE Access}, vol.~7, pp. 116\,753--116\,773, 2019.

\bibitem{RenzoSur2020}
M.~{Di Renzo}, K.~{Ntontin}, J.~{Song}, F.~H. {Danufane}, X.~{Qian},
  F.~{Lazarakis}, J.~{De Rosny}, D.~T. {Phan-Huy}, O.~{Simeone}, R.~{Zhang},
  M.~{Debbah}, G.~{Lerosey}, M.~{Fink}, S.~{Tretyakov}, and S.~{Shamai},
  ``Reconfigurable intelligent surfaces vs. relaying: Differences,
  similarities, and performance comparison,'' \emph{IEEE Open J. Commun. Soc.},
  vol.~1, pp. 798--807, 2020.

\bibitem{EmilBeat2020}
E.~{Björnson}, .~{Özdogan}, and E.~G. {Larsson}, ``Intelligent reflecting
  surface versus decode-and-forward: How large surfaces are needed to beat
  relaying?'' \emph{IEEE Wireless Commun. Lett.}, vol.~9, no.~2, pp. 244--248,
  2020.

\bibitem{BoulSur2020}
A.~A. {Boulogeorgos} and A.~{Alexiou}, ``Performance analysis of reconfigurable
  intelligent surface-assisted wireless systems and comparison with relaying,''
  \emph{IEEE Access}, vol.~8, pp. 94\,463--94\,483, 2020.

\bibitem{Liang2019Large}
Y.-C. Liang, R.~Long, Q.~Zhang, J.~Chen, H.~V. Cheng, and H.~Guo, ``Large
  intelligent surface/antennas ({LISA}): Making reflective radios smart,''
  \emph{Journal of Communications and Information Networks}, vol.~4, no.~2, pp.
  40--50, 2019.

\bibitem{ChenSum2019}
W.~{Chen}, X.~{Ma}, Z.~{Li}, and N.~{Kuang}, ``Sum-rate maximization for
  intelligent reflecting surface based terahertz communication systems,'' in
  \emph{2019 IEEE/CIC International Conference on Communications Workshops in
  China (ICCC Workshops)}, 2019, pp. 153--157.

\bibitem{ZhaoRIS2021}
M.~M. {Zhao}, Q.~{Wu}, M.~J. {Zhao}, and R.~{Zhang}, ``Intelligent reflecting
  surface enhanced wireless networks: Two-timescale beamforming optimization,''
  \emph{IEEE Trans. Wireless Commun.}, vol.~20, no.~1, pp. 2--17, 2021.

\bibitem{NadeemCE2020}
Q.~U.~A. {Nadeem}, H.~{Alwazani}, A.~{Kammoun}, A.~{Chaaban}, M.~{Debbah}, and
  M.~S. {Alouini}, ``Intelligent reflecting surface-assisted multi-user miso
  communication: Channel estimation and beamforming design,'' \emph{IEEE Open
  J. Commun. Soc.}, vol.~1, pp. 661--680, 2020.

\bibitem{Basar2019Tran}
E.~Basar, ``Transmission through large intelligent surfaces: A new frontier in
  wireless communications,'' in \emph{2019 European Conference on Networks and
  Communications (EuCNC)}, 2019, pp. 112--117.

\bibitem{Atapattu2020RIS}
S.~Atapattu, R.~Fan, P.~Dharmawansa, G.~Wang, J.~Evans, and T.~A. Tsiftsis,
  ``Reconfigurable intelligent surface assisted two–way communications:
  Performance analysis and optimization,'' \emph{{IEEE} Trans. Commun.},
  vol.~68, no.~10, pp. 6552--6567, 2020.

\bibitem{Ferreira2020Bit}
R.~C. Ferreira, M.~S.~P. Facina, F.~A.~P. De~Figueiredo, G.~Fraidenraich, and
  E.~R. De~Lima, ``Bit error probability for large intelligent surfaces under
  double-{Nakagami} fading channels,'' \emph{IEEE Open J. Commun. Soc.},
  vol.~1, pp. 750--759, 2020.

\bibitem{Yassin2017Localization}
A.~Yassin, Y.~Nasser, M.~Awad, A.~Al-Dubai, R.~Liu, C.~Yuen, R.~Raulefs, and
  E.~Aboutanios, ``Recent advances in indoor localization: A survey on
  theoretical approaches and applications,'' \emph{IEEE Commun. Surv. Tutor.},
  vol.~19, no.~2, pp. 1327--1346, 2017.

\bibitem{WEN2019Survey}
F.~Wen, H.~Wymeersch, B.~Peng, W.~P. Tay, H.~C. So, and D.~Yang, ``A survey on
  {5G} massive {MIMO} localization,'' \emph{Digit. Signal Process.}, vol.~94,
  pp. 21--28, 2019.

\bibitem{Wymeersch2017mmWave}
H.~Wymeersch, G.~Seco-Granados, G.~Destino, D.~Dardari, and F.~Tufvesson,
  ``{5G} mmwave positioning for vehicular networks,'' \emph{IEEE Wirel.
  Commun.}, vol.~24, no.~6, pp. 80--86, 2017.

\bibitem{3gpp.21.916}
\BIBentryALTinterwordspacing
3GPP, ``{Release description; Release 16},'' {3rd Generation Partnership
  Project (3GPP)}, Technical Specification (TS) 21.916, 2021. [Online].
  Available:
  \url{https://portal.3gpp.org/desktopmodules/Specifications/SpecificationDetails.aspx?specificationId=3493}
\BIBentrySTDinterwordspacing

\bibitem{Vari2014mmV}
M.~Vari and D.~Cassioli, ``mmwaves rssi indoor network localization,'' in
  \emph{2014 IEEE International Conference on Communications Workshops (ICC)},
  2014, pp. 127--132.

\bibitem{Lin2018indoor}
Z.~Lin, T.~Lv, and P.~T. Mathiopoulos, ``{3-D} indoor positioning for
  millimeter-wave massive {MIMO} systems,'' \emph{{IEEE} Trans. Commun.},
  vol.~66, no.~6, pp. 2472--2486, 2018.

\bibitem{Parasas2018ML}
K.~N. R. S.~V. Prasad, E.~Hossain, and V.~K. Bhargava, ``Machine learning
  methods for {RSS}-based user positioning in distributed massive {MIMO},''
  \emph{IEEE Trans. Wireless Commun.}, vol.~17, no.~12, pp. 8402--8417, 2018.

\bibitem{Fas2019mmW}
A.~Fascista, A.~Coluccia, H.~Wymeersch, and G.~Seco-Granados, ``Millimeter-wave
  downlink positioning with a single-antenna receiver,'' \emph{IEEE Trans.
  Wireless Commun.}, vol.~18, no.~9, pp. 4479--4490, 2019.

\bibitem{Zhou2017low}
Z.~Zhou, J.~Fang, L.~Yang, H.~Li, Z.~Chen, and R.~S. Blum, ``Low-rank tensor
  decomposition-aided channel estimation for millimeter wave {MIMO}-{OFDM}
  systems,'' \emph{{IEEE} J. Sel. Areas Commun.}, vol.~35, no.~7, pp.
  1524--1538, 2017.

\bibitem{Wen2021Pos}
F.~Wen, J.~Kulmer, K.~Witrisal, and H.~Wymeersch, ``{5G} positioning and
  mapping with diffuse multipath,'' \emph{IEEE Trans. Wireless Commun.},
  vol.~20, no.~2, pp. 1164--1174, 2021.

\bibitem{Alkhateeb2014Channel}
A.~Alkhateeb, O.~El~Ayach, G.~Leus, and R.~W. Heath, ``Channel estimation and
  hybrid precoding for millimeter wave cellular systems,'' \emph{IEEE J. Sel.
  Topics Signal Process.}, vol.~8, no.~5, pp. 831--846, 2014.

\bibitem{Zhang2018RIP}
W.~Zhang, T.~Kim, D.~J. Love, and E.~Perrins, ``Leveraging the restricted
  isometry property: Improved low-rank subspace decomposition for hybrid
  millimeter-wave systems,'' \emph{{IEEE} Trans. Commun.}, vol.~66, no.~11, pp.
  5814--5827, 2018.

\bibitem{Shah2018Position}
A.~Shahmansoori, G.~E. Garcia, G.~Destino, G.~Seco-Granados, and H.~Wymeersch,
  ``Position and orientation estimation through millimeter-wave {MIMO} in {5G}
  systems,'' \emph{IEEE Trans. Wireless Commun.}, vol.~17, no.~3, pp.
  1822--1835, 2018.

\bibitem{Hu2018Beyond}
S.~Hu, F.~Rusek, and O.~Edfors, ``Beyond massive {MIMO}: The potential of
  positioning with large intelligent surfaces,'' \emph{IEEE Trans. Signal
  Process.}, vol.~66, no.~7, pp. 1761--1774, 2018.

\bibitem{Aleg2019CRLB}
J.~V. Alegría and F.~Rusek, ``Cramér-rao lower bounds for positioning with
  large intelligent surfaces using quantized amplitude and phase,'' in
  \emph{2019 53rd Asilomar Conference on Signals, Systems, and Computers},
  2019, pp. 10--14.

\bibitem{Ji2020large}
J.~He, H.~Wymeersch, L.~Kong, O.~Silvén, and M.~Juntti, ``Large intelligent
  surface for positioning in millimeter wave {MIMO} systems,'' in \emph{2020
  IEEE 91st Vehicular Technology Conference (VTC2020-Spring)}, 2020, pp. 1--5.

\bibitem{Ma2021indoor}
T.~Ma, Y.~Xiao, X.~Lei, W.~Xiong, and Y.~Ding, ``Indoor localization with
  reconfigurable intelligent surface,'' \emph{IEEE Commun. Lett.}, vol.~25,
  no.~1, pp. 161--165, 2021.

\bibitem{Elzanaty2021Re}
A.~Elzanaty, A.~Guerra, F.~Guidi, and M.~Alouini, ``Reconfigurable intelligent
  surfaces for localization: Position and orientation error bounds,''
  \emph{IEEE Trans. Signal Process.}, pp. 1--1, 2021.

\bibitem{zhang2021toward}
H.~Zhang, H.~Zhang, B.~Di, K.~Bian, Z.~Han, and L.~Song, ``Towards ubiquitous
  positioning by leveraging reconfigurable intelligent surface,'' \emph{IEEE
  Commun. Lett.}, vol.~25, no.~1, pp. 284--288, 2021.

\bibitem{Zhang2021Meta}
------, ``Metalocalization: Reconfigurable intelligent surface aided multi-user
  wireless indoor localization,'' \emph{IEEE Trans. Wireless Commun.}, pp.
  1--1, 2021.

\bibitem{Lin2021Channel}
Y.~Lin, S.~Jin, M.~Matthaiou, and X.~You, ``Channel estimation and user
  localization for {IRS}-assisted {MIMO-OFDM} systems,'' \emph{IEEE Trans.
  Wireless Commun.}, pp. 1--1, 2021.

\bibitem{Jensen2020}
T.~L. {Jensen} and E.~{De Carvalho}, ``An optimal channel estimation scheme for
  intelligent reflecting surfaces based on a minimum variance unbiased
  estimator,'' in \emph{2020 IEEE International Conference on Acoustics, Speech
  and Signal Processing (ICASSP)}, 2020, pp. 5000--5004.

\bibitem{De2021Channel}
G.~T. de~Araújo, A.~L.~F. de~Almeida, and R.~Boyer, ``Channel estimation for
  intelligent reflecting surface assisted {MIMO} systems: A tensor modeling
  approach,'' \emph{IEEE J. Sel. Topics Signal Process.}, vol.~15, no.~3, pp.
  789--802, 2021.

\bibitem{Zhang2021Cost}
W.~Zhang and W.~P. Tay, ``Cost-efficient {RIS}-aided channel estimation via
  rank-one matrix factorization,'' \emph{IEEE Wireless Commun. Lett.}, pp.
  1--1, 2021.

\bibitem{He2020Cascaded}
Z.-Q. He and X.~Yuan, ``Cascaded channel estimation for large intelligent
  metasurface assisted massive {MIMO},'' \emph{IEEE Wireless Commun. Lett.},
  vol.~9, no.~2, pp. 210--214, 2020.

\bibitem{chen2019channel}
J.~Chen, Y.-C. Liang, H.~V. Cheng, and W.~Yu, ``Channel estimation for
  reconfigurable intelligent surface aided multi-user {MIMO} systems,''
  \emph{arXiv preprint arXiv:1912.03619}, 2019.

\bibitem{stoica1989reparametrization}
P.~Stoica and T.~S{\"o}derstr{\"o}m, ``On reparametrization of loss functions
  used in estimation and the invariance principle,'' \emph{Signal Process.},
  vol.~17, no.~4, pp. 383--387, 1989.

\bibitem{Jing2020Positioning}
J.~Zhang, Z.~Zheng, Z.~Fei, and X.~Bao, ``Positioning with dual reconfigurable
  intelligent surfaces in millimeter-wave {MIMO} systems,'' in \emph{2020
  IEEE/CIC International Conference on Communications in China (ICCC)}, 2020,
  pp. 800--805.

\bibitem{wymeersch2009cooperative}
H.~Wymeersch, J.~Lien, and M.~Z. Win, ``Cooperative localization in wireless
  networks,'' \emph{Proc. IEEE}, vol.~97, no.~2, pp. 427--450, 2009.

\bibitem{xu2015distributed}
W.~Xu, F.~Quitin, M.~Leng, W.~P. Tay, and S.~G. Razul, ``Distributed
  localization of a {RF} target in nlos environments,'' \emph{{IEEE} J. Sel.
  Areas Commun.}, vol.~33, no.~7, pp. 1317--1330, 2015.

\bibitem{conti2012network}
A.~Conti, M.~Guerra, D.~Dardari, N.~Decarli, and M.~Z. Win, ``Network
  experimentation for cooperative localization,'' \emph{{IEEE} J. Sel. Areas
  Commun.}, vol.~30, no.~2, pp. 467--475, 2012.

\end{thebibliography}
\clearpage

\end{document}
